\documentclass[11pt]{article}

\usepackage[margin=1in]{geometry}
\usepackage{amsmath,amssymb,amsthm,mathtools}
\usepackage{bm}
\usepackage{graphicx}
\usepackage{tikz}
\usetikzlibrary{arrows.meta,positioning}
\usepackage{microtype}
\usepackage{enumitem}
\usepackage[hidelinks]{hyperref}
\hypersetup{
  pdftitle={Zero-SNR Analyticity of the Scalar MMSE Is Equivalent to Gaussianity},
  pdfauthor={Yixing Zhang},
  pdfsubject={Minimum mean-square error, Gaussian channels, and zero-SNR analyticity},
  pdfkeywords={MMSE, Gaussian channel, zero-SNR, analyticity, Borel transform, backward heat flow, Hermite polynomials}
}

\allowdisplaybreaks
\setlist[itemize]{leftmargin=1.6em,itemsep=0.25em,topsep=0.3em}
\newtheorem{theorem}{Theorem}[section]
\newtheorem{proposition}[theorem]{Proposition}
\newtheorem{lemma}[theorem]{Lemma}
\newtheorem{corollary}[theorem]{Corollary}
\theoremstyle{definition}

\theoremstyle{remark}
\newtheorem{remark}[theorem]{Remark}
\newtheorem{example}[theorem]{Example}

\newcommand{\E}{\mathbb{E}}
\newcommand{\Pp}{\mathbb{P}}
\newcommand{\R}{\mathbb{R}}
\newcommand{\C}{\mathbb{C}}

\newcommand{\mmse}{\operatorname{mmse}}
\newcommand{\Var}{\operatorname{Var}}
\newcommand{\He}{\operatorname{He}}
\newcommand{\dd}{\,\mathrm{d}}
\newcommand{\cB}{\mathcal{B}}
\newcommand{\cC}{\mathcal{C}}

\title{Zero-SNR Analyticity of the Scalar MMSE\\Is Equivalent to Gaussianity}
\author{Yixing Zhang\\[-0.15em]
\small \href{mailto:yixing.zhang.duke@gmail.com}{\texttt{yixing.zhang.duke@gmail.com}}}
\date{September 13, 2026}

\begin{document}
\maketitle

\begin{abstract}
Let $Y_s=\sqrt{s}X+Z$, where $Z$ is standard Gaussian and independent of
the real random variable $X$.  We prove that, under the square-exponential
moment condition $\E e^{\beta X^2}<\infty$ for some $\beta>0$, the scalar
minimum mean-square error $\mmse_X(s)$ is analytic at zero signal-to-noise
ratio if and only if $X$ is Gaussian, with constant random variables included
as degenerate Gaussians.

The proof converts estimation in the Gaussian channel into a backward heat
flow acting on the moment-generating function $M(z)=\E e^{zX}$.  Under the
stated tail condition, every non-Gaussian input forces $M$ to have a nonzero
complex zero.  We show that each zero cluster produces a finite singularity
in its localized Borel transform at the action $\xi=z_0^2/2$.  After removing
the action scale, the Borel coefficients have a nonzero $n^{-1/2}$ prefactor
for a simple zero.  A zero of multiplicity $m\geq2$ splits according to
the roots of a Hermite polynomial and instead contributes a prefactor
$n^{-m/2}e^{r_m\sqrt{2n}}$.  A
finite-disc localization and relative-cycle continuation argument then show
that at least one such singularity survives in the full Borel transform.
Thus, for every non-Gaussian input in the stated class, the formal zero-SNR
expansion is Gevrey--1 but divergent.  Rational-MMSE rigidity and the
analogous analyticity criterion for mutual information follow as
corollaries.
\end{abstract}

\tableofcontents

\section{Introduction}

Consider the scalar Gaussian channel
\begin{equation}
    Y_s=\sqrt{s}\,X+Z,
    \qquad Z\sim N(0,1),\quad Z\perp X,
    \label{eq:channel}
\end{equation}
and the minimum mean-square error
\begin{equation}
    \mmse_X(s)
    =\E\!\left[(X-\E[X\mid Y_s])^2\right],
    \qquad s\geq 0.
    \label{eq:mmse-def}
\end{equation}
The MMSE is a central object in nonlinear estimation and information theory.
In particular, the I--MMSE identity relates it to the derivative of mutual
information with respect to signal-to-noise ratio
\cite{GuoShamaiVerdu2005}.  Vector gradients, input-estimate
representations, mismatched analogues, and pointwise versions of this
identity were subsequently developed in
\cite{PalomarVerdu2006,PalomarVerdu2007,Verdu2010,VenkatWeissman2012}.
The monotonicity, smoothness, crossing, and higher-derivative properties of
the MMSE have consequently been studied in considerable detail
\cite{GuoWuShamaiVerdu2011,WuVerdu2012,BustinEtAl2013,Ledoux2016,Nguyen2024}.

For positive SNR, Gaussian smoothing makes $s\mapsto\mmse_X(s)$ real
analytic under mild assumptions.  The endpoint $s=0$ is different.  If all
moments of $X$ exist, the MMSE is infinitely right differentiable at zero
\cite{GuoWuShamaiVerdu2011}, but smoothness alone does not imply convergence
of its Taylor series.  A symmetric binary input is already known to be
nonanalytic at zero \cite{GuoWuShamaiVerdu2011};
Section~\ref{sec:simple} strengthens this observation by identifying the
divergent coefficient scale and its controlling complex action.  This raises
a rigidity question: which input laws have a genuinely convergent zero-SNR
expansion?

Low-SNR and wideband expansions have a long history in information theory;
see, among others,
\cite{Verdu2002,PrelovVerdu2004,PayaroPalomar2009,AlghamdiCalmon2021}.
Beyond fixed-order expansions, Ledoux's heat-flow identities
\cite{Ledoux2016} and the combinatorial analysis of Mansanarez, Poly, and
Swan \cite{MansanarezPolySwan2024} study the algebraic structure of
higher-order derivatives.  The latter work also gives sufficient conditions
for the MMSE conjecture, which asks whether the full MMSE curve determines
the input law up to translation and reflection.  Our question is different:
we ask whether analyticity at the zero-SNR endpoint alone forces Gaussianity,
and study the large-order growth of the complete expansion rather than
reconstructing a law from equality of two MMSE curves.  Our
posterior-mean representation is also related to the classical
Robbins--Miyasawa--Tweedie identities and their modern higher-order
extensions
\cite{Robbins1956,Miyasawa1961,Brown1971,Efron2011,ManorMichaeli2024}.

Our main result gives a complete answer under a square-exponential moment
assumption.

\begin{theorem}[Zero-SNR analyticity rigidity]
\label{thm:main}
Let $X$ be a real random variable satisfying
\begin{equation}
    \E e^{\beta X^2}<\infty
    \label{eq:subgaussian}
\end{equation}
for some $\beta>0$.  Then
\begin{equation}
    \mmse_X(s)\text{ is analytic at }s=0
    \quad\Longleftrightarrow\quad
    X\text{ is Gaussian},
    \label{eq:main-equivalence}
\end{equation}
where a constant random variable is regarded as a degenerate Gaussian.
Here analyticity means that the function defined for $s\geq0$ agrees on some
$[0,\varepsilon)$ with a holomorphic function on a complex disc centered at
zero.
\end{theorem}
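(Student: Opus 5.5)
The easy direction is a computation. If $X\sim N(\mu,\sigma^2)$, the posterior is Gaussian and
\[
\mmse_X(s)=\frac{\sigma^2}{1+s\sigma^2},
\]
which is analytic on the disc $|s|<1/\sigma^2$; for a constant input $\mmse_X\equiv0$. All of the work is therefore in the converse. We assume $\mmse_X$ is analytic at $0$ and will show that $X$ is Gaussian.

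The plan is to pass to the moment-generating function $M(z)=\E e^{zX}$. The tail condition \eqref{eq:subgaussian} makes $M$ entire of order at most $2$, with $|M(z)|\le Ce^{|z|^2/(4\beta)}$.

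\emph{Step 1 (MMSE as a backward heat flow).} Write the density of $Y_s$ as $\varphi(y)\,\E e^{\sqrt s yX - sX^2/2}$. By I--MMSE (or Tweedie), $\mmse_X(s)$ is a Gaussian average of derivatives of $\log \E e^{\sqrt s yX-sX^2/2}$. In the variable $z=\sqrt s\,y$, the factor $e^{-sX^2/2}$ acts as the backward heat operator $e^{-\frac s2\partial_z^2}$ applied to $M$. Expanding in $s$, the $n$-th Taylor coefficient of $\mmse_X$ at $0$ becomes a finite combination of Gaussian integrals of $\partial_z^k\log M$ at scale $\sqrt s$. Equivalently, the formal series in $s$ is the Gaussian moment (Borel) transform of the Taylor data of $\log M$ at the origin, and the Laplace variable is the action $\xi=z^2/2$. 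The target statement is this: if the formal series has a positive radius of convergence, then its localized Borel transform is entire in $\xi$, which forces $\log M$ to extend to an entire function, i.e.\ $M$ has no zeros.

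\emph{Step 2 (Hadamard).} Suppose $M$ is zero-free. Since $M$ is entire of order $\le2$, Hadamard's theorem gives $M=e^{P}$ with $P$ a polynomial of degree at most $2$. By Marcinkiewicz's theorem (or directly: $M$ is a real MGF with $M(0)=1$), $X$ is Gaussian or constant. Everything therefore reduces to the contrapositive: \emph{if $M$ has a nonzero complex zero $z_0$, then $\mmse_X$ is not analytic at $0$.}

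\emph{Step 3 (singularity from a zero).} Near a zero $z_0$ of multiplicity $m$, the function $\log M$ has a term $m\log(z-z_0)$. I would push this through the Borel/Laplace representation of Step 1. For $m=1$ it should produce a singularity of the Borel transform at $\xi_0=z_0^2/2$. The corresponding coefficients then behave like $c\,n^{-1/2}\xi_0^{-n}n!$ with $c\neq0$, so the coefficients grow like $n!\,|\xi_0|^{-n}$ and the radius of convergence is $0$. For $m\ge2$ the heat flow splits the zero into the roots of a Hermite polynomial $\He_m$ at scale $\sqrt s$. Expanding around those roots gives the prefactor $n^{-m/2}e^{r_m\sqrt{2n}}$, which is still nonzero and still divergent. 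I would carry out this local computation by a saddle-point analysis of the Gaussian integral localized on a small disc around $z_0$.

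\emph{Step 4 (no cancellation) — the main obstacle.} Different zeros, and in particular a conjugate pair $z_0,\bar z_0$ or a pair $\pm z_0$ sharing the same action, could a priori cancel in the full Borel transform. The key step is to pick the zeros of minimal $|\xi|=|z_0^2/2|$. Only finitely many lie on that circle, because $M$ has finite order. I would then localize $\log M$ on a finite disc containing exactly this cluster and treat the remainder as analytic in a larger Borel disc. The surviving contributions on the critical circle are then a finite sum of the form
\[
\sum_j c_j\, n^{-m_j/2} e^{r_{m_j}\sqrt{2n}}\,\xi_j^{-n}.
\]
The coefficients $c_j$ are nonzero and the $\xi_j$ are distinct or carry distinct subexponential prefactors. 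Such a sum cannot vanish identically along all $n$, by a Vandermonde/almost-periodicity argument: the sequence $\sum c_j e^{-in\theta_j}$ with distinct $\theta_j$ has nonzero $\limsup$. When different zeros map to the same $\xi$ (for example $\pm z_0$), I would continue along relative cycles in the $z$-plane to show that the branches add rather than cancel. I expect this continuation bookkeeping to be the delicate part of the proof. The conclusion is that the formal series is Gevrey-1 but divergent, which contradicts analyticity and proves the theorem.
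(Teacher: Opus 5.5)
Your Steps 1--3 follow the paper's route: backward heat flow, Hadamard, the simple-zero $n^{-1/2}$ and Hermite-split $n^{-m/2}e^{r_m\sqrt{2n}}$ cluster asymptotics, and finite-disc localization. Two points are left implicit, but both can be filled. First, the object integrated against the Gaussian kernel is $F=Q_z^2/Q$ with $Q(s,\cdot)=e^{-(s/2)\partial_z^2}M$, not Taylor data of $\log M$ at the origin. Its poles therefore move with $s$, and that drift enters the leading constant. Second, you need the Poincar\'e-expansion step to know that analyticity of $\mmse_X$ forces convergence of the formal series at all. That step rests on flatness $O(e^{-c/s})$ of the outer Gaussian integral, obtained via $0\le F\le Q_{zz}$ and the square-exponential moment.

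The genuine gap is in Step~4, at a shared action $\pm z_0$.

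\textbf{Why your Step~4 fails.} For a simple zero $w$ the cluster constant is $C_w=-\frac{M'(w)}{w}\exp\bigl[-\frac{wM''(w)}{2M'(w)}\bigr]$, and the pair $\pm z_0$ contributes $(C_{z_0}+C_{-z_0})\,d_nz_0^{-2n}$ at leading order. For a non-symmetric law these are unrelated complex numbers in general. When $z_0$ is purely imaginary, the only relation is $C_{-z_0}=\overline{C_{z_0}}$, so the leading term is $2\operatorname{Re}C_{z_0}$. It vanishes under a single real condition that nothing in your argument excludes.

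A non-symmetric law whose minimal circle carries only such a pair is easy to produce. Take $X=\varepsilon+Y$ with $\varepsilon$ a symmetric sign and $Y$ an independent, non-symmetric two-point law scaled so that its MGF zeros lie outside $|z|\le\pi/2$. The minimal circle then carries only $\pm i\pi/2$.

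Saying that continuation along relative cycles makes the branches ``add rather than cancel'' is not an argument. The paper uses relative-cycle continuation for a different purpose. Once the leading constants cancel, coefficient asymptotics on the minimal circle tell you nothing.

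\textbf{How the paper closes it.} It uses a dichotomy on the zero divisor.
\begin{itemize}
\item If the divisor is invariant under $z\mapsto-z$, Hadamard gives $M(z)/M(-z)=e^{cz}$. Translating $X$ by $c/2$ changes $A$ only by a constant and yields a symmetric law with $F(s,-z)=F(s,z)$. The two clusters then have identical diagonal series and genuinely add.
\item If the divisor is not invariant, pick $z_0$ with $m(z_0)>m(-z_0)$. Strict increase of $r_m$ (with $r_1=0$) makes the higher-multiplicity cluster dominate.
\end{itemize}
In the second case, that $z_0$ generally lies outside the minimal circle, where coefficient asymptotics of the full Borel transform cannot see it. This is exactly why the paper needs Lemma~\ref{lem:heat-pole}: each cluster's Borel transform continues along every path avoiding $0$ and its own action. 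It then needs the path argument of Proposition~\ref{prop:no-cancellation} to keep the singularity of $B_{z_0}+B_{-z_0}$ at a non-minimal action from being masked by the other clusters and the remainder. Your plan contains neither the dichotomy nor this continuation, so it does not exclude cancellation.

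In the symmetric-divisor case, your minimal-circle argument does work after the translation. There the exponential-sum fact for distinct actions ($\sum_j c_je^{-in\theta_j}\not\to0$) is a simpler substitute for the paper's separation step.
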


For $X\sim N(\mu,\sigma^2)$, direct Gaussian conditioning gives
\begin{equation}
    \mmse_X(s)=\frac{\sigma^2}{1+\sigma^2s},
    \label{eq:gaussian-mmse}
\end{equation}
so the implication ``Gaussian $\Rightarrow$ analytic'' in
\eqref{eq:main-equivalence} is immediate.  The
content of the theorem is that no other sub-Gaussian input has an analytic
MMSE at the zero-SNR boundary.

\begin{corollary}[Rational MMSE rigidity]
\label{cor:rational}
Under \eqref{eq:subgaussian}, suppose there exists a rational function
$R\in\R(s)$ such that
\[
    \mmse_X(s)=R(s),\qquad s>0.
\]
Then $X$ is Gaussian and \eqref{eq:gaussian-mmse} holds.
\end{corollary}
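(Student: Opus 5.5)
The plan is to deduce the corollary directly from Theorem~\ref{thm:main}. The steps are: show that a rational MMSE extends analytically to $s=0$, apply the theorem to conclude Gaussianity, and identify the resulting formula with \eqref{eq:gaussian-mmse}.

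\emph{Step 1: $R$ has no pole at $0$.} Since $\E e^{\beta X^2}<\infty$, all moments of $X$ are finite. For $s\ge0$ we have the bound
\[
0\le \mmse_X(s)\le \Var(X)<\infty,
\]
because $\E[X\mid Y_s]$ is at least as good as the constant estimator $\E X$. So $R$ is bounded on $(0,\infty)$. A rational function that is bounded on $(0,\varepsilon)$ cannot have a pole at $0$, since near a pole it behaves like $c\,s^{-k}$ with $c\neq0$ and $k\ge1$. Hence $R$ is holomorphic on some complex disc $D$ centered at $0$.

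\emph{Step 2: agreement at $s=0$.} We need $\mmse_X(0)=R(0)$, so that the function defined on $[0,\varepsilon)$ agrees with $R$ there, as the analyticity definition in Theorem~\ref{thm:main} requires. I would obtain this from right-continuity of $\mmse_X$ at $0$. One route is to use the known infinite right-differentiability at zero for inputs with all moments finite \cite{GuoWuShamaiVerdu2011}. A direct route is dominated convergence: $\E[X\mid Y_s]\to\E X$ in $L^2$ as $s\downarrow0$. This holds because $Y_s\to Z$ with $Z$ independent of $X$, and the moment condition supplies the needed uniform integrability. Either way $\mmse_X(s)\to\Var(X)=\mmse_X(0)$. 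Since $R$ is continuous at $0$, we get $R(0)=\lim_{s\downarrow0}R(s)=\mmse_X(0)$. Thus $\mmse_X=R$ on $[0,\varepsilon)$ with $R$ holomorphic on $D$, so $\mmse_X$ is analytic at $s=0$.

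\emph{Step 3: conclusion.} Theorem~\ref{thm:main} now gives that $X\sim N(\mu,\sigma^2)$, possibly with $\sigma=0$. Then \eqref{eq:gaussian-mmse} holds for all $s\ge0$, and $R(s)=\sigma^2/(1+\sigma^2 s)$ on $(0,\infty)$, hence identically as rational functions.

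The only point needing any care is Step 2, the continuity of the MMSE at the endpoint. It is routine under \eqref{eq:subgaussian}, and in fact under mere finiteness of the second moment. All of the substantive difficulty lies in Theorem~\ref{thm:main}, which we take as given.
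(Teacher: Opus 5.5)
Your proposal is correct and follows the paper's route: boundedness of $\mmse_X$ near $s=0$ makes $0$ a removable singularity of $R$, so $\mmse_X$ is analytic at zero and Theorem~\ref{thm:main} applies. Your Step~2 makes explicit the agreement $R(0)=\Var(X)=\mmse_X(0)$, which the paper uses without comment; for it, rely on the cited right-smoothness at zero (or on Proposition~\ref{prop:poincare}, which gives $A(s)\to(\E X)^2$), because the heuristic that $Y_s\to Z$ with $Z$ independent of $X$ does not by itself control conditional expectations.
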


\begin{corollary}[Mutual-information rigidity]
\label{cor:mutual-information}
Let $I_X(s)=I(X;\sqrt{s}X+Z)$.  Under \eqref{eq:subgaussian}, $I_X$ is
analytic at $s=0$ if and only if $X$ is Gaussian.  In that case
\begin{equation}
    I_X(s)=\frac12\log(1+\sigma^2s).
\end{equation}
\end{corollary}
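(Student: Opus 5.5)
We deduce Corollary~\ref{cor:mutual-information} from Theorem~\ref{thm:main} via the I--MMSE identity \cite{GuoShamaiVerdu2005},
\[
    \frac{\dd}{\dd s}I_X(s)=\tfrac12\,\mmse_X(s),\qquad s>0,
\]
together with $I_X(0)=0$. Under \eqref{eq:subgaussian} all moments of $X$ are finite, so $\mmse_X$ is continuous on $[0,\infty)$ with $\mmse_X(0)=\Var X$. The identity therefore extends to the right derivative at $s=0$, and we have the integral representation
\[
    I_X(s)=\tfrac12\int_0^s \mmse_X(t)\dd t .
\]

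For ``Gaussian $\Rightarrow$ analytic'', take $X\sim N(\mu,\sigma^2)$. Integrating \eqref{eq:gaussian-mmse} gives $I_X(s)=\tfrac12\log(1+\sigma^2 s)$. This is holomorphic on the disc $|s|<\sigma^{-2}$, or on all of $\C$ when $\sigma=0$, where $I_X\equiv0$. That settles this direction and also gives the displayed formula.

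For the converse, suppose $I_X$ agrees on $[0,\varepsilon)$ with a function $F$ holomorphic on a disc $D$ centered at $0$. Then $F'$ is holomorphic on $D$. On $(0,\varepsilon)$ we have $F'(s)=I_X'(s)=\tfrac12\mmse_X(s)$, and by continuity of $\mmse_X$ at $0$ this equality also holds at $s=0$ in the one-sided sense. Hence $\mmse_X$ agrees on $[0,\varepsilon)$ with the holomorphic function $2F'$, so it is analytic at $s=0$ in the sense of Theorem~\ref{thm:main}. The theorem then forces $X$ to be Gaussian.

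The only point needing care is the boundary behaviour at $s=0$: we must justify that $I_X$ is differentiable from the right with derivative $\tfrac12\Var X$. Equivalently, we need the integral representation to hold down to $0$, so that agreement on the half-open interval $[0,\varepsilon)$ transfers between $I_X$ and $\mmse_X$. Continuity of $\mmse_X$ at $0$ follows from dominated convergence, because $\E X^2<\infty$ and $\mmse_X(s)\le\Var X$. With that in hand the argument is short, and all of the real difficulty lies in Theorem~\ref{thm:main}.
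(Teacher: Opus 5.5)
Your proposal is correct and is essentially the paper's proof. The I--MMSE identity $I_X'=\tfrac12\mmse_X$ transfers analyticity at $0$ from $I_X$ to $\mmse_X$, Theorem~\ref{thm:main} then forces Gaussianity, and integrating \eqref{eq:gaussian-mmse} gives the converse and the explicit formula. Your extra care at the endpoint $s=0$ is a refinement the paper leaves implicit. Your one-line dominated-convergence justification of $\mmse_X(s)\to\Var X$ is terse, but the fact is standard and also follows from Proposition~\ref{prop:poincare} with $N=0$, whose leading coefficient is $(\E X)^2$.
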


The main contributions are the analyticity characterization in
Theorem~\ref{thm:main} and the connection between complex MGF zeros and
large-order coefficient growth.  Corollaries~\ref{cor:rational}
and~\ref{cor:mutual-information} are direct consequences of that
characterization.

The proof uses three ideas.  First, the square-exponential hypothesis is a
standard sub-Gaussian tail condition
\cite{BuldyginKozachenko2000,Vershynin2018}.  It makes the moment-generating
function $M(z)=\E e^{zX}$ entire of order at most two.  Hadamard
factorization implies that a zero-free $M$ must be Gaussian; hence every
non-Gaussian law in our class has a nonzero complex MGF zero.  This is in the
circle of ideas around the Marcinkiewicz theorem for entire characteristic
functions
\cite{Marcinkiewicz1939,Lukacs1970,EremenkoFryntov2021,Boas1954,Levin1996}.
Second, the
posterior mean in \eqref{eq:channel} can be written through the backward heat
evolution
\begin{equation}
    Q(s,z)=\E e^{zX-sX^2/2},
    \qquad Q_s=-\tfrac12 Q_{zz},
\end{equation}
and the captured posterior energy is a Gaussian diagonal integral of
$F=Q_z^2/Q$.  A simple zero of $Q$ is therefore a pole of the integrand;
higher-multiplicity zeros split into simple branches under the local heat
flow analyzed below.  Third, an exact double-Borel identity transfers the
resulting spatial poles to finite
singularities in the Borel plane.  The use of heat evolution to study complex
zeros has classical precedents in the de Bruijn--Newman theory and its later
developments \cite{deBruijn1950,Newman1976,GriffinEtAl2019}; related
connections between backward heat flow and Hermite-polynomial zero dynamics
appear in \cite{Kabluchko2024}.

The simple-zero mechanism is a moving-pole version of the identity
\begin{equation}
    \cB\!\left[-\frac r\zeta
      \sum_{n\geq0}(2n-1)!!\left(\frac{s}{\zeta^2}\right)^n\right](\xi)
    =-\frac r\zeta\left(1-\frac{2\xi}{\zeta^2}\right)^{-1/2}.
\end{equation}
Multiple zeros require a finer local analysis.  Backward heat flow splits a
zero of multiplicity $m$ along the roots of the probabilists' Hermite
polynomial $\He_m$, and a uniform saddle-point estimate shows that the Borel
coefficients contain the nonvanishing factor
$n^{-m/2}e^{r_m\sqrt{2n}}$, where $r_m$ is the largest root of $\He_m$.
We use standard facts about Hermite zeros and their asymptotics from
\cite{Szego1975,Dominici2007,IsmailLi1992,Deift1999} and the classical
framework of Borel summation and resurgence
\cite{Watson1912,Sokal1980,Balser1994,Sauzin2014,Dorigoni2019,Costin2008}.

The remainder of the paper is organized as follows.  Section~\ref{sec:setup}
normalizes the problem and proves the MGF zero lemma.
Section~\ref{sec:heat} derives the backward-heat representation and identifies
the actual zero-SNR asymptotic expansion.  Section~\ref{sec:borel} establishes
the double-Borel identity and finite-disc localization.  Simple and multiple
zeros are treated in Sections~\ref{sec:simple} and \ref{sec:multiple}.
Section~\ref{sec:continuation} proves local continuation and excludes
cancellation.  The theorem and its consequences are completed in
Section~\ref{sec:completion}.

Figure~\ref{fig:proof-dependency} summarizes the logical structure of the
proof.  Gray nodes are classical or standard ingredients, blue nodes are
problem-specific technical steps, orange nodes are the principal
propositions, and the green node is the main rigidity theorem.

\begin{figure}[t]
\centering
\resizebox{\textwidth}{!}{%
\begin{tikzpicture}[
    >=Latex,
    node distance=8mm and 9mm,
    every node/.style={font=\small},
    input/.style={draw,rounded corners,fill=black!4,align=center,
      text width=3.4cm,minimum height=9mm},
    classical/.style={draw,rounded corners,fill=black!10,align=center,
      text width=3.55cm,minimum height=11mm},
    technical/.style={draw,rounded corners,fill=blue!10,align=center,
      text width=3.55cm,minimum height=11mm},
    proposition/.style={draw,rounded corners,fill=orange!16,align=center,
      text width=3.75cm,minimum height=11mm},
    conclusion/.style={draw,rounded corners,fill=green!15,very thick,
      align=center,text width=5.0cm,minimum height=12mm},
    corollary/.style={draw,rounded corners,fill=violet!10,align=center,
      text width=3.7cm,minimum height=10mm},
    arrow/.style={->,thick}
]

\node[input] (nong) at (-5.8,0)
  {Non-Gaussian $X$ and $\E e^{\beta X^2}<\infty$};
\node[input] (heat) at (1.8,0)
  {Backward heat flow and Gaussian diagonal representation};

\node[classical] (zero) at (-5.8,-1.8)
  {Lemma~\ref{lem:zero-free}\\Classical zero-free criterion};
\node[classical] (gevrey) at (1.8,-1.8)
  {Lemma~\ref{lem:gevrey}\\Standard Gevrey--1 estimate};

\node[technical] (mgfzero) at (-5.8,-3.6)
  {$M$ has a nonzero complex zero};
\node[proposition] (borel) at (1.8,-3.6)
  {Proposition~\ref{prop:double-borel}\\Double-Borel identity};
\node[proposition] (poincare) at (6.0,-3.6)
  {Proposition~\ref{prop:poincare}\\Actual right Poincar\'e expansion};

\node[classical] (large) at (-5.8,-5.4)
  {Lemma~\ref{lem:analytic-large-powers}\\Large powers, including $c=0$};
\node[technical] (uniform) at (-1.8,-5.4)
  {Lemma~\ref{lem:uniform-extraction}\\Uniform Cauchy saddle};
\node[proposition] (localize) at (2.2,-5.4)
  {Proposition~\ref{prop:localization}\\Finite-disc localization};
\node[technical] (continue) at (6.2,-5.4)
  {Lemma~\ref{lem:heat-pole}\\Local Borel continuation};

\node[proposition] (simple) at (-5.8,-7.2)
  {Proposition~\ref{prop:simple-asymptotics}\\Simple-zero coefficients};
\node[proposition] (multiple) at (-1.8,-7.2)
  {Proposition~\ref{prop:multiple-asymptotics}\\Multiple-zero coefficients};
\node[proposition] (genuine) at (2.2,-7.2)
  {Corollary~\ref{cor:genuine-action}\\The action is a genuine singularity};

\node[proposition] (nocancel) at (0.2,-9.0)
  {Proposition~\ref{prop:no-cancellation}\\A finite Borel singularity cannot cancel};

\node[conclusion] (main) at (0.2,-10.8)
  {Theorem~\ref{thm:main}\\Zero-SNR analyticity $\Longleftrightarrow$ Gaussianity};

\node[corollary] (rational) at (-2.3,-12.6)
  {Corollary~\ref{cor:rational}\\Rational-MMSE rigidity};
\node[corollary] (mutual) at (2.7,-12.6)
  {Corollary~\ref{cor:mutual-information}\\Mutual-information rigidity};

\draw[arrow] (nong) -- (zero);
\draw[arrow] (zero) -- (mgfzero);
\draw[arrow] (heat) -- (gevrey);
\draw[arrow] (gevrey) -- (poincare);
\draw[arrow] (gevrey) -- (borel);
\draw[arrow] (mgfzero) -- (large);
\draw[arrow] (mgfzero.south) -- ++(0,-0.45) -| (uniform.north);
\draw[arrow] (large) -- (simple);
\draw[arrow] (uniform) -- (multiple);
\draw[arrow] (borel) -- (localize);
\draw[arrow] (borel.south) -- ++(0,-0.45) -| (continue.north);
\draw[arrow] (simple.south) -- ++(0,-0.45) -| (genuine.west);
\draw[arrow] (multiple) -- (genuine);
\draw[arrow] (continue) -- (genuine);
\draw[arrow] (genuine) -- (nocancel);
\draw[arrow] (localize) -- (nocancel);
\draw[arrow] (poincare.east) -- ++(1.15,0) |- (main.east);
\draw[arrow] (nocancel) -- (main);
\draw[arrow] (main) -- (rational);
\draw[arrow] (main) -- (mutual);

\end{tikzpicture}%
}
\caption{Dependency graph for the proof of the zero-SNR analyticity
rigidity theorem.  The graph distinguishes standard inputs from the
problem-specific Borel and zero-cluster analysis.}
\label{fig:proof-dependency}
\end{figure}

\paragraph{Notation.}
The Euclidean modulus on $\C$ is denoted by $|\cdot|$.  If
$G(s)=\sum_{n\geq0}g_ns^n$ is formal, then $[s^n]G=g_n$ and its ordinary
Borel transform is $\cB G(\xi)=\sum_{n\geq0}g_n\xi^n/n!$.  We use the
convention $(-1)!!=1$ and write
\begin{equation}
    d_n=(2n-1)!!,
    \qquad d_0=1.
\end{equation}
For a zero cluster coalescing at $z_0$, its diagonal series and Borel
transform are denoted by $\widehat A_{z_0}$ and
$B_{z_0}=\cB\widehat A_{z_0}$, respectively.  All contours are positively
oriented unless stated otherwise.

\section{Normalization and zeros of the MGF}
\label{sec:setup}

Translation and scaling of the input give
\begin{equation}
    \mmse_{X+c}(s)=\mmse_X(s),
    \qquad
    \mmse_{aX}(s)=a^2\mmse_X(a^2s).
    \label{eq:affine-mmse}
\end{equation}
Indeed, the observation generated by $X+c$ is
$\sqrt{s}X+Z+\sqrt{s}c$; subtracting the deterministic last term does not
change the generated sigma-field, while
\begin{equation}
 (X+c)-\E[X+c\mid \sqrt{s}(X+c)+Z]
 =X-\E[X\mid\sqrt{s}X+Z].
\end{equation}
For scaling, the channel for $aX$ at SNR $s$ is the channel for $X$ at SNR
$a^2s$, up to replacing the observation by its negative when $a<0$, and its
conditional estimation error is multiplied by $a$.  Squaring and taking
expectations gives the second identity in \eqref{eq:affine-mmse}.
The square-exponential moment condition is preserved, with a possibly smaller
exponent, under both operations.  Indeed,
$(x-c)^2\leq2x^2+2c^2$, and scaling merely replaces the admissible exponent
by its quotient with $a^2$.  Thus the normalization does not leave the class
of input laws covered by the theorem.
Thus, apart from the constant case, we may and do normalize
\begin{equation}
    \E X=0,
    \qquad \E X^2=1.
    \label{eq:normalization}
\end{equation}

Let
\begin{equation}
    M(z)=\E e^{zX},\qquad z\in\C.
    \label{eq:mgf}
\end{equation}
By Young's inequality,
\begin{equation}
    |z||x|\leq \frac{\beta x^2}{2}+\frac{|z|^2}{2\beta},
\end{equation}
and hence
\begin{equation}
    |M(z)|
    \leq \E e^{|z||X|}
    \leq \E e^{\beta X^2/2}\,e^{|z|^2/(2\beta)}.
    \label{eq:mgf-growth}
\end{equation}
Consequently $M$ is entire of order at most two and finite type.

The following zero-free criterion is classical.  It is an immediate
consequence of Hadamard factorization and is closely related to the
Marcinkiewicz characterization of Gaussian laws
\cite{Marcinkiewicz1939,EremenkoFryntov2021,Boas1954,Levin1996}.  We include
the short proof only to make the argument self-contained; no originality is
claimed for this lemma.

\begin{lemma}[Classical zero-free criterion]
\label{lem:zero-free}
If the entire function $M$ in \eqref{eq:mgf} has no zero in $\C$, then $X$
is Gaussian.
\end{lemma}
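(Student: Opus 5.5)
We use Hadamard factorization to identify $M$ as the exponential of a quadratic, and then let positivity of the law pin down the coefficients. By \eqref{eq:mgf-growth}, $M$ is entire of order at most two. If $M$ has no zeros, the Hadamard factorization theorem (or, more elementarily, a holomorphic logarithm combined with the Borel--Carathéodory inequality) gives $M(z)=e^{P(z)}$ with $P$ a polynomial of degree at most two. Concretely, $g=\log M$ is entire because $\C$ is simply connected and $M$ is zero-free. Moreover $\operatorname{Re} g(z)=\log|M(z)|\le C+|z|^2/(2\beta)$ by \eqref{eq:mgf-growth}. Borel--Carathéodory then bounds $|g(z)|$ by $O(|z|^2)$ on circles, and Cauchy estimates show that $g$ is a polynomial of degree at most two. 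We write $P(z)=a+bz+cz^2$.

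Next we pin down the coefficients. From $M(0)=1$ we get $a=0$ (modulo $2\pi i$, which is harmless). For real $z$, $M(z)>0$, so $P$ is real on $\R$ and hence $b,c\in\R$. For the characteristic function, $\varphi(t)=M(it)=e^{ibt-ct^2}$. Since $|\varphi(t)|\le1$, we need $c\ge0$.

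Finally we identify the law. If $c=0$, then $\varphi(t)=e^{ibt}$, so $X=b$ almost surely, which is the degenerate Gaussian allowed in Theorem~\ref{thm:main}. If $c>0$, then $\varphi$ is the characteristic function of $N(b,2c)$, and Lévy uniqueness gives $X\sim N(b,2c)$. Alternatively, one can argue that two distributions with equal MGFs on a real neighbourhood of zero coincide.

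The only step requiring care is the passage from order-two growth to a quadratic exponent. Invoking Hadamard's theorem directly (genus at most two, with no zeros, so the canonical product is trivial) settles it. Borel--Carathéodory is the fallback if self-containment is desired. Everything else is routine.
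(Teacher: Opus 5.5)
Your proof is correct and follows the paper's route. Hadamard factorization gives $M=e^{P}$ with $\deg P\le 2$; normalizing at $0$ and using positivity on $\R$ makes the coefficients real; nonnegativity of the quadratic coefficient then identifies a possibly degenerate Gaussian (you get it from $|M(it)|\le1$, the paper from $P''(0)=\Var(X)$). The one compressed step is ``$M(t)>0$, so $P$ is real on $\R$'': positivity only gives $\operatorname{Im}P(t)\in2\pi\mathbb Z$, and you need continuity in $t$ together with $\operatorname{Im}P(0)=0$, as the paper spells out, to conclude $\operatorname{Im}P\equiv0$.
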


\begin{proof}
Hadamard factorization for a zero-free entire function of finite order at most
two gives
\begin{equation}
    M(z)=e^{P(z)},
\end{equation}
where $P$ is a polynomial of degree at most two.  Since $M(0)=1$, we choose
$P(0)=0$.  For real $t$, $M(t)>0$, so
$\operatorname{Im}P(t)\in2\pi\mathbb Z$.  This integer-valued continuous
function of $t$ vanishes at $t=0$ and hence vanishes identically.  Thus
$P(t)$ is real for every real $t$, and the identity theorem applied to
$P(z)-\overline{P(\overline z)}$ shows that all coefficients of $P$ are
real.  Therefore
\begin{equation}
    P(z)=az+bz^2,\qquad a,b\in\R.
\end{equation}
Moreover,
\begin{equation}
    a=P'(0)=\E X,
    \qquad
    2b=P''(0)=\Var(X)\geq0.
\end{equation}
Thus
$M(z)=\exp(\mu z+\sigma^2z^2/2)$, the MGF of
$N(\mu,\sigma^2)$.
\end{proof}

In particular, a non-Gaussian input satisfying \eqref{eq:subgaussian} has at
least one nonzero complex MGF zero.  The proof below shows that such zeros are
exactly the obstruction to zero-SNR analyticity.

\section{Backward heat flow and the zero-SNR expansion}
\label{sec:heat}

\subsection{A heat representation of posterior energy}

Define
\begin{equation}
    Q(s,z)=\E\exp\!\left(zX-\frac{sX^2}{2}\right).
    \label{eq:Q-def}
\end{equation}
Condition \eqref{eq:subgaussian} permits differentiation under the
expectation for $|s|<\beta$, locally uniformly in $z$.  Therefore $Q$ is
jointly holomorphic there, entire in $z$, and satisfies
\begin{equation}
    Q_s=-\frac12Q_{zz},
    \qquad Q(0,z)=M(z).
    \label{eq:backward-heat}
\end{equation}
Formally, $Q(s,\cdot)=e^{-(s/2)\partial_z^2}M$.

Let $\phi(y)=(2\pi)^{-1/2}e^{-y^2/2}$.  The output density in
\eqref{eq:channel} is
\begin{equation}
    p_s(y)=\phi(y)Q(s,\sqrt{s}\,y),
    \label{eq:output-density}
\end{equation}
because
\begin{align}
 p_s(y)
 &=\E\phi(y-\sqrt{s}X) \\
 &=\phi(y)\E\exp\left(\sqrt{s}\,yX-\frac{sX^2}{2}\right)
 =\phi(y)Q(s,\sqrt{s}\,y).
 \label{eq:output-density-derivation}
\end{align}
Likewise, Bayes' formula gives the conditional first moment as the ratio of
the corresponding unnormalized integrals:
\begin{equation}
    \E[X\mid Y_s=y]
    =\frac{Q_z(s,\sqrt{s}\,y)}{Q(s,\sqrt{s}\,y)}.
    \label{eq:posterior-mean}
\end{equation}
Indeed, the numerator after extracting $\phi(y)$ is
$\E[Xe^{\sqrt{s}yX-sX^2/2}]=Q_z(s,\sqrt{s}y)$, and the denominator is
$Q(s,\sqrt{s}y)$.
Set
\begin{equation}
    A(s)=\E\!\left[\E[X\mid Y_s]^2\right]
    =\E X^2-\mmse_X(s)
    \label{eq:A-def}
\end{equation}
and
\begin{equation}
    F(s,z)=\frac{Q_z(s,z)^2}{Q(s,z)}.
    \label{eq:F-def}
\end{equation}
Using \eqref{eq:output-density}--\eqref{eq:posterior-mean}, we first obtain
\begin{align}
 A(s)
 &=\int_{\R}
   \left(\frac{Q_z(s,\sqrt{s}y)}{Q(s,\sqrt{s}y)}\right)^2
   \phi(y)Q(s,\sqrt{s}y)\dd y \\
 &=\int_{\R}\phi(y)
   \frac{Q_z(s,\sqrt{s}y)^2}{Q(s,\sqrt{s}y)}\dd y.
 \label{eq:posterior-energy-before-change}
\end{align}
Changing variables $z=\sqrt{s}\,y$ now gives the exact diagonal integral
\begin{equation}
    \boxed{
    A(s)=\frac{1}{\sqrt{2\pi s}}
    \int_{\R}e^{-z^2/(2s)}F(s,z)\dd z.}
    \label{eq:diagonal-integral}
\end{equation}
Because $Q(0,0)=1$, the function $F$ is holomorphic in a bidisc centered at
$(0,0)$.  Away from that bidisc, the moving zeros of $Q(s,\cdot)$ are the
poles responsible for divergence.

\subsection{The actual right asymptotic series}

Expand near $(0,0)$ as
\begin{equation}
    F(s,z)=\sum_{k,\ell\geq0}f_{k,\ell}s^kz^\ell.
    \label{eq:F-series}
\end{equation}
The centered Gaussian moments satisfy
\begin{equation}
    \frac{1}{\sqrt{2\pi s}}
    \int_{\R}e^{-z^2/(2s)}z^{2q}\dd z
    =(2q-1)!!\,s^q,
    \label{eq:gaussian-moments}
\end{equation}
whereas odd powers integrate to zero.  This suggests the formal series
\begin{equation}
    \widehat A(s)
    =\sum_{k,q\geq0}f_{k,2q}(2q-1)!!\,s^{k+q}.
    \label{eq:Ahat}
\end{equation}

The next lemma is a standard consequence of bidisc Cauchy estimates and the
growth of Gaussian moments.  It is recorded for later use and is not claimed
as a new general Gevrey theorem; compare the standard treatments of Gevrey
asymptotics and Borel summation in \cite{Balser1994,Costin2008}.

\begin{lemma}[Standard Gevrey--1 estimate]
\label{lem:gevrey}
There are constants $C_0,C_1>0$ such that
\begin{equation}
    \bigl|[s^n]\widehat A(s)\bigr|
    \leq C_0C_1^n n!,
    \qquad n\geq0.
    \label{eq:gevrey-bound}
\end{equation}
Thus the formal zero-SNR expansion is Gevrey of order at most one.
\end{lemma}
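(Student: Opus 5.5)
The bound follows from Cauchy estimates for $F$ on a fixed bidisc together with the growth of Gaussian moments. Since $Q$ is jointly holomorphic for $|s|<\beta$ and entire in $z$, and $Q(0,0)=M(0)=1$, there exist radii $\rho_1\in(0,\beta)$ and $\rho_2>0$ with $|Q(s,z)|\geq 1/2$ on the closed bidisc $\{|s|\leq\rho_1,\ |z|\leq\rho_2\}$. On that bidisc $F=Q_z^2/Q$ is holomorphic and bounded by some $K$; the bound on $Q_z$ comes from Cauchy estimates for $Q$ on a slightly larger bidisc. The two-variable Cauchy inequality then gives
\[
    |f_{k,\ell}|\leq K\rho_1^{-k}\rho_2^{-\ell},\qquad k,\ell\geq0 .
\]

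Next we bound the coefficients of $\widehat A$ directly from \eqref{eq:Ahat}:
\[
    [s^n]\widehat A=\sum_{k+q=n}f_{k,2q}\,(2q-1)!!,
\]
so that
\[
    \bigl|[s^n]\widehat A\bigr|\leq K\sum_{q=0}^n\rho_1^{-(n-q)}\rho_2^{-2q}(2q-1)!! .
\]
Two elementary facts handle this sum. First, $(2q-1)!!\leq 2^q q!\leq 2^q n!$ for $q\leq n$. Second, with $C_1=\max(\rho_1^{-1},2\rho_2^{-2})$, each summand is at most $C_1^n n!$. There are $n+1$ summands, and $n+1\leq 2^n$, so after enlarging $C_1$ to $2C_1$ we get
\[
    \bigl|[s^n]\widehat A\bigr|\leq K\,(2C_1)^n\,n!,
\]
which is \eqref{eq:gevrey-bound} with $C_0=K$.

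The only point needing care is the choice of the bidisc. The radius in $s$ must stay below $\beta$ so that $Q$ is holomorphic, and both radii must be small enough that $Q$ stays away from zero. Continuity of $Q$ at $(0,0)$ gives both conditions at once, so there is no real obstacle. It is worth noting that this argument bounds the formal series only; whether $\widehat A$ is the actual asymptotic expansion of $A(s)$ is a separate question, which Proposition~\ref{prop:poincare} addresses, and it is not needed here.
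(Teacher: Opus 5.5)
Your proof is correct and follows essentially the same route as the paper: Cauchy estimates for $F=Q_z^2/Q$ on a bidisc around $(0,0)$ where $Q$ stays away from zero, then $(2q-1)!!\leq 2^q q!\leq 2^q n!$ along the diagonal $k+q=n$, with the factor $n+1$ absorbed into an exponential. Your version only makes the choice of bidisc and the constant $C_1=\max(\rho_1^{-1},2\rho_2^{-2})$ explicit, which the paper leaves implicit.
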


\begin{proof}
Holomorphy of $F$ in a bidisc gives
$|f_{k,2q}|\leq C\delta^{-k}\rho^{-2q}$ by Cauchy's estimate.  Since
$(2q-1)!!\leq2^q q!\leq2^q n!$ when $k+q=n$, summing the $n+1$ diagonal
terms and absorbing the polynomial factor into an exponential proves
\eqref{eq:gevrey-bound}.
\end{proof}

\begin{proposition}[Poincar\'e expansion at zero]
\label{prop:poincare}
The series \eqref{eq:Ahat} is the right Poincar\'e asymptotic expansion of
$A(s)$: for every $N\geq0$,
\begin{equation}
    A(s)-\sum_{n=0}^{N}[s^n]\widehat A(s)s^n
    =O(s^{N+1}),
    \qquad s\downarrow0.
    \label{eq:poincare}
\end{equation}
Consequently, if $A$ is analytic at zero, then \eqref{eq:Ahat} has positive
ordinary radius of convergence.
\end{proposition}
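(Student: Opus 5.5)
The plan is to split the diagonal integral \eqref{eq:diagonal-integral} into a local part near $z=0$, where $F$ can be Taylor expanded, and a tail part that is exponentially small as $s\downarrow0$. The local part reproduces the truncated series \eqref{eq:Ahat}.

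Fix $\rho>0$ small enough that $F(s,z)$ is holomorphic on a closed bidisc $|s|\le\delta$, $|z|\le 2\rho$. Write $A(s)=A_{\rm loc}(s)+A_{\rm tail}(s)$, where $A_{\rm loc}$ integrates over $|z|\le\rho$ and $A_{\rm tail}$ over $|z|>\rho$.

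\textbf{Tail.} I will not bound $F$ pointwise, since $Q(s,\cdot)$ may vanish at complex points, although it is positive on the real axis for real $s\ge0$. Instead I return to the probabilistic form \eqref{eq:posterior-energy-before-change}, which reads
\[
A_{\rm tail}(s)=\E\bigl[\E[X\mid Y_s]^2\,\mathbf 1\{|Y_s|>\rho/\sqrt s\}\bigr].
\]
By Cauchy--Schwarz this is at most $(\E X^4)^{1/2}\,\Pp(|Y_s|>\rho/\sqrt s)^{1/2}$. Since $|Y_s|\le\sqrt s|X|+|Z|$, the probability is bounded by $\Pp(|Z|>\rho/(2\sqrt s))+\Pp(|X|>\rho/(2s))$. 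The first term is $O(e^{-c/s})$ by the Gaussian tail. The second is $O(e^{-c'/s^2})$ by Markov's inequality applied to $e^{\beta X^2}$, using \eqref{eq:subgaussian}. Hence $A_{\rm tail}(s)=O(s^{N+1})$ for every $N$.

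\textbf{Local part.} On $|z|\le\rho$ and $0\le s\le\delta/2$, truncate the double Taylor series. The remainder $R_N(s,z)=F-\sum_{k+\ell/2\le N}f_{k,\ell}s^kz^\ell$ collects terms with $k+\ell/2>N$. Using the Cauchy bounds $|f_{k,\ell}|\le C\delta^{-k}(2\rho)^{-\ell}$, I expect a bound of the form
\[
|R_N|\le C_N\sum_{k+\ell/2>N}|s|^k|z|^\ell,
\]
which I will regroup as $C_N'\bigl(s^{N+1}+\sum_{j}s^{k_j}|z|^{\ell_j}\bigr)$ with $k_j+\ell_j/2\ge N+\tfrac12$; there are finitely many minimal monomials plus geometric tails in $s$ and $z$. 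The Gaussian weight has absolute moments $\int e^{-z^2/2s}|z|^\ell\,dz/\sqrt{2\pi s}\le C_\ell s^{\ell/2}$, so the remainder contributes $O(s^{N+1/2})$. Repeating the argument with $N+1$ in place of $N$ upgrades this to $O(s^{N+1})$.

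Next, the finitely many retained monomials, integrated over $|z|\le\rho$, equal their full-line Gaussian moments \eqref{eq:gaussian-moments} up to $O(e^{-c/s})$. Odd powers vanish and even powers give exactly the coefficients $f_{k,2q}(2q-1)!!s^{k+q}$ of \eqref{eq:Ahat}, restricted to $k+q\le N$. This proves \eqref{eq:poincare}.

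\textbf{Consequence.} If $A$ is analytic at $0$ in the stated sense, its holomorphic extension has a convergent Taylor series. That Taylor series is also a Poincar\'e expansion of $A$ on $[0,\varepsilon)$. Asymptotic expansions are unique, so its coefficients coincide with those of $\widehat A$, and therefore \eqref{eq:Ahat} has positive radius of convergence.

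\textbf{Main obstacle.} The hardest step is the tail estimate, because complex zeros of $Q(s,\cdot)$ make $F$ uncontrolled off the real axis. The probabilistic rewriting avoids this: positivity of $Q$ on the real line and the fourth-moment bound are all that is needed. Getting the regrouping of the bidisc remainder uniformly right is routine bookkeeping.
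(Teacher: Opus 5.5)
Your proposal is correct and matches the paper's proof in all essentials: the same inner/outer split of the diagonal integral, an exponentially flat tail from the Gaussian tail bound and the square-exponential moment, a parabolic-degree Taylor expansion of $F$ integrated against Gaussian moments, and uniqueness of asymptotic expansions for the consequence. The differences are cosmetic. You bound the tail by Cauchy--Schwarz with $\E X^4$, where the paper uses $0\le F\le Q_{zz}$ and completes the square. You reach $O(s^{N+1})$ by bootstrapping from an $O(s^{N+1/2})$ remainder, where the paper expands each $F_k$ to odd degree $2(N-k)+1$ so that the extra term integrates to zero. One small correction: in your remainder bound, keep the Cauchy weights $\delta^{-k}(2\rho)^{-\ell}$ inside the sum, because $C_N$ cannot absorb them uniformly in $(k,\ell)$.
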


\begin{proof}
For real $s>0$ and $z\in\R$, Cauchy--Schwarz under the positive tilted
measure in \eqref{eq:Q-def} yields
\begin{equation}
    Q_z(s,z)^2\leq Q(s,z)Q_{zz}(s,z),
    \qquad 0\leq F(s,z)\leq Q_{zz}(s,z).
    \label{eq:F-domination}
\end{equation}
Completing the square gives
\begin{equation}
    \frac{e^{-z^2/(2s)}}{\sqrt{2\pi s}}Q_{zz}(s,z)
    =\E\!\left[
      X^2\frac{e^{-(z-sX)^2/(2s)}}{\sqrt{2\pi s}}
    \right].
    \label{eq:complete-square}
\end{equation}
Fix a small $\delta>0$ inside the spatial holomorphy disc.  The integral of
the right-hand side of \eqref{eq:complete-square} over $|z|>\delta$ equals
\begin{equation}
    \E\!\left[X^2
      \Pp\{|sX+\sqrt{s}G|>\delta\mid X\}\right],
    \label{eq:outer-probability}
\end{equation}
where $G\sim N(0,1)$.  The event is contained in
\begin{equation}
    \{|X|>\delta/(2s)\}
    \cup
    \{|G|>\delta/(2\sqrt{s})\}.
\end{equation}
The square-exponential moment of $X$ and the Gaussian tail bound imply that
\eqref{eq:outer-probability} is $O(e^{-c/s})$ for some $c>0$.  Hence the
outer part of \eqref{eq:diagonal-integral} is flat at zero.

On $|z|<\delta$, apply the two-variable Taylor formula to $F$, assigning
parabolic degree $k+\ell/2$ to $s^kz^\ell$.  Termwise integration uses
\eqref{eq:gaussian-moments}; the integrated Taylor remainder is
$O(s^{N+1})$.  Combining the inner and outer estimates proves
\eqref{eq:poincare}.  Uniqueness of the Taylor expansion of a holomorphic
extension proves the final assertion.  A coefficientwise version of the
Taylor-remainder estimate is given in Appendix~\ref{app:poincare-details}.
\end{proof}

\section{The Borel transform and finite-disc localization}
\label{sec:borel}

\subsection{An exact double-Borel identity}

Write
\begin{equation}
    F(s,z)=\sum_{k\geq0}F_k(z)s^k
\end{equation}
and take the exponential Borel transform only in the $s$ index:
\begin{equation}
    \widetilde F(u,z)
    =\sum_{k\geq0}F_k(z)\frac{u^k}{k!}.
    \label{eq:partial-borel}
\end{equation}
For a holomorphic function $G$, define its circular heat-Borel average by
\begin{equation}
    (\cC G)(v)
    =\frac{1}{2\pi}\int_0^{2\pi}
      G(\sqrt{2v}\cos\theta)\dd\theta.
    \label{eq:circular-average}
\end{equation}
The value is understood through its power series near $v=0$, and thereafter
by analytic continuation.

\begin{proposition}[Double-Borel identity]
\label{prop:double-borel}
The ordinary Borel transform of \eqref{eq:Ahat} obeys the formal identity
\begin{equation}
    \boxed{
    \cB\widehat A(\xi)
    =\frac{\dd}{\dd\xi}\left[
      \xi\int_0^1
      \cC\widetilde F(t\xi,\cdot)((1-t)\xi)\dd t
    \right].}
    \label{eq:double-borel}
\end{equation}
It is an analytic identity wherever both sides converge near the origin.
\end{proposition}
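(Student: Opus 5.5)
The plan is to verify \eqref{eq:double-borel} coefficient by coefficient on monomials, and then to upgrade the formal identity to an analytic one using the bidisc Cauchy bounds already used in Lemma~\ref{lem:gevrey}.

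\textbf{Left-hand side.} By \eqref{eq:Ahat}, with $n=k+q$,
\[
\cB\widehat A(\xi)=\sum_{k,q\geq0}f_{k,2q}\,d_q\,\frac{\xi^{k+q}}{(k+q)!}.
\]

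\textbf{Right-hand side, first the operator $\cC$.} I compute $\cC$ on a monomial $z^\ell$. For odd $\ell$ the average of $\cos^\ell\theta$ vanishes, which matches the vanishing of odd Gaussian moments in \eqref{eq:gaussian-moments}. For $\ell=2q$, the Wallis formula gives $\frac{1}{2\pi}\int_0^{2\pi}\cos^{2q}\theta\dd\theta=\frac{(2q)!}{4^q(q!)^2}$. Hence
\[
\cC[z^{2q}](v)=(2v)^q\frac{(2q)!}{4^q (q!)^2}=\frac{(2q)!}{2^q q!}\cdot\frac{v^q}{q!}=d_q\frac{v^q}{q!}.
\]
Combining this with \eqref{eq:partial-borel},
\[
\cC\widetilde F(u,\cdot)(v)=\sum_{k,q}f_{k,2q}\,d_q\,\frac{u^k}{k!}\,\frac{v^q}{q!}.
\]

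\textbf{Right-hand side, the $t$-integral and derivative.} Substitute $u=t\xi$ and $v=(1-t)\xi$, and use the Beta integral
\[
\int_0^1 t^k(1-t)^q\dd t=\frac{k!\,q!}{(k+q+1)!}.
\]
The factorials $k!\,q!$ cancel, so the bracket in \eqref{eq:double-borel} equals
\[
\sum_{k,q}f_{k,2q}\,d_q\,\frac{\xi^{k+q+1}}{(k+q+1)!}.
\]
Differentiating in $\xi$ produces exactly $\sum f_{k,2q}\,d_q\,\xi^{k+q}/(k+q)!$. This agrees with the left-hand side, so the formal identity holds.

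\textbf{Analytic identity.} Cauchy's estimate on the bidisc gives $|f_{k,2q}|\leq C\delta^{-k}\rho^{-2q}$, and we have $d_q/q!\leq 2^q$. Therefore the double series for $\cC\widetilde F(u,\cdot)(v)$ is dominated by
\[
C\sum_k\frac{(|u|/\delta)^k}{k!}\sum_q\bigl(2|v|/\rho^2\bigr)^q .
\]
This converges for $|v|<\rho^2/2$, locally uniformly in $u$. So for $|\xi|$ small the integrand converges absolutely and uniformly in $t\in[0,1]$. This justifies exchanging the sum with the $t$-integral and with $\partial_\xi$.

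Likewise, Lemma~\ref{lem:gevrey} makes $\cB\widehat A$ convergent near $0$. Both sides are therefore holomorphic on a common small disc and equal there, and the identity persists wherever both sides are continued analytically from near the origin.

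The only point needing care is this domination, i.e.\ the uniform absolute convergence; the combinatorics is routine once the Wallis and Beta factors are matched.
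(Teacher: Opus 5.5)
Your proposal is correct and matches the paper's proof. Both arguments use the Wallis circular moments to get $\cC\widetilde F(u,\cdot)(v)=\sum f_{k,2q}d_q\,u^kv^q/(k!\,q!)$, then the Beta integral, then termwise differentiation in $\xi$. Your Cauchy-bound domination, which justifies exchanging the double sum with the $t$-integral and $\partial_\xi$ for small $|\xi|$, is a welcome addition: the paper leaves that analytic step implicit.
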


\begin{proof}
The circular moments are
\begin{equation}
    \frac{1}{2\pi}\int_0^{2\pi}\cos^{2q}\theta\dd\theta
    =\frac{(2q)!}{2^{2q}(q!)^2}.
\end{equation}
Therefore
\begin{equation}
    \cC\widetilde F(u,\cdot)(v)
    =\sum_{k,q\geq0}
      f_{k,2q}(2q-1)!!\frac{u^k}{k!}\frac{v^q}{q!}.
    \label{eq:circular-series}
\end{equation}
Using
\begin{equation}
    \int_0^1t^k(1-t)^q\dd t
    =\frac{k!q!}{(k+q+1)!},
\end{equation}
the coefficient of $f_{k,2q}$ inside the brackets of
\eqref{eq:double-borel} is
\begin{equation}
    \frac{(2q-1)!!}{(k+q+1)!}\xi^{k+q+1}.
\end{equation}
Differentiation gives precisely the coefficient prescribed by the Borel
transform of \eqref{eq:Ahat}.
\end{proof}

\subsection{Localization to finitely many zero clusters}

Fix $R>0$ such that $M$ has no zero on $|z|=R$.  Since
$Q(s,\cdot)\to M$ uniformly on compact sets, Rouch\'e's theorem implies that,
for small $|s|$, the number of zeros of $Q(s,\cdot)$ in $|z|<R$, counted
with multiplicity, is constant.  Surround the resulting finitely many zero
clusters by disjoint circles $C_j$.  For $z$ outside $C_j$, define
\begin{equation}
    P_j(s,z)=\frac{1}{2\pi i}\int_{C_j}
      \frac{F(s,w)}{z-w}\dd w.
    \label{eq:cluster-principal-part}
\end{equation}
By residues, $P_j$ is the sum of the principal parts of $F$ at the zeros in
that cluster.  The contour representation also shows that the cluster sum
is holomorphic in $s$, even if the individual roots are only holomorphic in
$\sqrt{s}$.

\begin{proposition}[Finite-disc localization]
\label{prop:localization}
Let
\begin{equation}
    H(s,z)=F(s,z)-\sum_jP_j(s,z).
    \label{eq:H-def}
\end{equation}
After decreasing the radii slightly, $H$ is holomorphic on a bidisc
$|s|<\delta$, $|z|<R$.  The Borel transform of the diagonal series generated
by $H$ is holomorphic for $|\xi|<R^2/2$.
\end{proposition}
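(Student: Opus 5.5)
The plan has two parts: first holomorphy of $H$ on a bidisc, then a bound on the Borel transform of its diagonal series.

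\textbf{Holomorphy of $H$.} For small $|s|$, all zeros of $Q(s,\cdot)$ in $|z|<R$ lie inside the circles $C_j$. Hence $F=Q_z^2/Q$ is holomorphic in $z$ on $\{|z|<R\}$ minus these zeros. Inside each $C_j$, the singularities of $F$ are exactly the poles at the cluster's zeros. Since $P_j$ is the sum of the principal parts there, $F-P_j$ extends holomorphically across the interior of $C_j$. Each $P_k$ with $k\neq j$ is holomorphic inside $C_j$, so $H$ is holomorphic in $z$ on $|z|<R$.

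For joint holomorphy I will use a Cauchy representation. Take a circle $|w|=R'$ with $R'$ slightly below $R$ that avoids the zeros of $M$; shrinking $R$ slightly is allowed by the statement. Then for $|z|<R'$,
\begin{equation*}
H(s,z)=\frac{1}{2\pi i}\int_{|w|=R'}\frac{F(s,w)}{w-z}\dd w .
\end{equation*}
This holds because the residues at the cluster poles are exactly removed by the $P_j$; equivalently, $H$ has no singularities inside, so Cauchy's formula for $H$ applies, and the $P_j$ terms contribute nothing on the outer circle beyond what is already accounted for. Since $Q(s,w)$ is bounded away from zero on $|w|=R'$ for $|s|<\delta$, and $F$ is jointly holomorphic there, the integral is jointly holomorphic in $(s,z)$ on $|s|<\delta$, $|z|<R'$. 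I rename $R'$ as $R$.

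\textbf{Radius of the Borel transform.} Write $H=\sum_{k,\ell}h_{k,\ell}s^kz^\ell$. Cauchy's estimate on a slightly smaller bidisc, with radii $\delta'$ and $\rho<R$ arbitrary, gives
\begin{equation*}
|h_{k,2q}|\le C\,\delta'^{-k}\rho^{-2q}.
\end{equation*}
The diagonal series $\sum h_{k,2q}(2q-1)!!\,s^{k+q}$ then has Borel coefficients
\begin{equation*}
\frac{1}{n!}\sum_{k+q=n}h_{k,2q}(2q-1)!!.
\end{equation*}
The main point is to exploit $k!\,q!\le n!$ more sharply than in Lemma~\ref{lem:gevrey}. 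Using $(2q-1)!!\le 2^q q!$ and $q!/n!\le 1/k!$, the $n$-th Borel coefficient is bounded by
\begin{equation*}
C\sum_{k+q=n}\frac{\delta'^{-k}}{k!}\left(\frac{2}{\rho^2}\right)^q .
\end{equation*}
This sum is the $n$-th coefficient of the product of $e^{\xi/\delta'}$ with the geometric series in $2\xi/\rho^2$. That product converges for $|\xi|<\rho^2/2$. Letting $\rho\uparrow R$ gives holomorphy of the Borel transform on $|\xi|<R^2/2$.

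Equivalently, one can argue through Proposition~\ref{prop:double-borel}. The partial Borel transform $\widetilde H(u,\cdot)$ is entire in $u$ and holomorphic in $z$ on $|z|<R$. The circular average $\cC$ evaluated at $v$ samples points with $|z|\le\sqrt{2|v|}$, so it is holomorphic for $|v|<R^2/2$. Since $t\xi$ and $(1-t)\xi$ have modulus at most $|\xi|$, the right-hand side of \eqref{eq:double-borel} is holomorphic on $|\xi|<R^2/2$.

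\textbf{Main obstacle.} The step needing the most care is the joint holomorphy. Individual roots depend on $\sqrt{s}$, but the contour representations of $P_j$ and of $H$ above bypass this. One must also ensure the circles $C_j$ and the outer circle stay free of zeros for all $|s|<\delta$, which follows from the Rouché argument together with uniform convergence $Q(s,\cdot)\to M$.
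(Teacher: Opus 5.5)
Your proposal is correct and is essentially the paper's proof. The Borel estimate (Cauchy bounds on a bidisc, $(2q-1)!!\le 2^q q!$, $q!/n!\le 1/k!$, then $\rho\uparrow R$) is identical. Your single Cauchy integral over the outer circle is only a cosmetic variant of the paper's per-cluster formula $F-P_j=\frac{1}{2\pi i}\int_{C_j}\frac{F(s,w)}{w-z}\dd w$. It is valid because each $P_j(s,w)/(w-z)$ is holomorphic on $|w|\ge R$ and $O(|w|^{-2})$, so its integral over the outer circle vanishes.

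The only superfluous step is shrinking $R$ to $R'$. The hypothesis already makes $|w|=R$ zero-free for $M$, so you may integrate over that circle itself, with $\delta$ small, and keep the full radius $R^2/2$.
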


\begin{proof}
For each fixed small $s$, the residue theorem identifies $P_j(s,\cdot)$
with the complete sum of the principal parts of $F(s,\cdot)$ at the poles
inside $C_j$.  The removability is joint in $(s,z)$, as can be seen without
choosing individual zero branches.  If $z$ lies inside $C_j$ and is
temporarily kept away from the poles, Cauchy's residue theorem gives
\begin{equation}
 F(s,z)-P_j(s,z)
 =\frac{1}{2\pi i}\int_{C_j}\frac{F(s,w)}{w-z}\dd w.
 \label{eq:cluster-holomorphic-remainder}
\end{equation}
The right-hand side is jointly holomorphic in $(s,z)$ throughout the
interior of $C_j$, because $F$ is jointly holomorphic on the fixed contour.
It therefore supplies the joint removable extension across every moving
pole in that cluster.  The terms $P_\ell$ for $\ell\neq j$ are holomorphic
near $C_j$, so the same conclusion holds for $H$.  There are only finitely
many clusters in $|z|<R$; decreasing $\delta$ if necessary gives joint
holomorphy on $|s|<\delta$, $|z|<R$.

Write $H(s,z)=\sum h_{k,\ell}s^kz^\ell$.  Cauchy estimates give
the desired bounds only on compactly contained bidiscs, so we keep the
auxiliary radii explicit.  Fix
\begin{equation}
    0<\delta_0<\delta,
    \qquad 0<\rho<R.
\end{equation}
Since $H$ is holomorphic on a neighborhood of the compact torus
$|s|=\delta_0$, $|z|=\rho$, there is a constant
$C_{\delta_0,\rho}$ such that
\begin{equation}
    |h_{k,2q}|\leq
    C_{\delta_0,\rho}\delta_0^{-k}\rho^{-2q}.
    \label{eq:H-cauchy}
\end{equation}
Let $b_n^H$ be the coefficient of $\xi^n$ in the Borel transform of the
diagonal series generated by $H$.  Since
\begin{equation}
    (2q-1)!!\leq 2^q q!,
    \qquad
    \frac{q!}{(q+k)!}\leq\frac1{k!},
\end{equation}
we obtain
\begin{align}
    |b_n^H|
    &\leq C_{\delta_0,\rho}
      \left(\frac{2}{\rho^2}\right)^n
      \sum_{k=0}^n
      \frac{(\rho^2/(2\delta_0))^k}{k!} \\
    &\leq C'_{\delta_0,\rho}
      \left(\frac{2}{\rho^2}\right)^n.
    \label{eq:H-borel-bound}
\end{align}
Thus the Borel transform is holomorphic for $|\xi|<\rho^2/2$.
Because $\rho<R$ was arbitrary, these discs exhaust
$|\xi|<R^2/2$, proving the assertion.  Notice that no estimate on the
open boundary $|z|=R$ has been used.
\end{proof}

Proposition~\ref{prop:localization} is the first noncancellation mechanism:
zeros outside a fixed finite disc cannot remove a singularity generated by a
zero strictly inside it.

\section{The singularity generated by a simple zero}
\label{sec:simple}

Suppose $M(z_0)=0$ and $M'(z_0)\neq0$.  The implicit-function theorem gives
a holomorphic zero $\zeta(s)$ of $Q$ with $\zeta(0)=z_0$.  From
\eqref{eq:backward-heat},
\begin{equation}
    \zeta'(0)
    =-\frac{Q_s(0,z_0)}{Q_z(0,z_0)}
    =\frac{M''(z_0)}{2M'(z_0)}.
    \label{eq:zeta-prime}
\end{equation}
At $z=\zeta(s)$, the principal part of $F$ is
\begin{equation}
    \frac{r(s)}{z-\zeta(s)},
    \qquad
    r(s)=Q_z(s,\zeta(s)),\quad r(0)=M'(z_0)\neq0.
    \label{eq:simple-principal-part}
\end{equation}
Indeed, writing $Q(s,z)=(z-\zeta(s))q(s,z)$ gives
$q(s,\zeta(s))=Q_z(s,\zeta(s))=r(s)$ and
\begin{equation}
 \frac{Q_z(s,z)^2}{Q(s,z)}
 =\frac{r(s)}{z-\zeta(s)}+O(1)
 \qquad (z\to\zeta(s)).
\end{equation}
Its contribution to the diagonal series is
\begin{equation}
    \widehat A_{z_0}(s)
    =-\sum_{q\geq0}(2q-1)!!\,
      r(s)\zeta(s)^{-2q-1}s^q.
    \label{eq:simple-diagonal}
\end{equation}

\begin{proposition}[Simple-zero coefficient asymptotics]
\label{prop:simple-asymptotics}
Let $d_n=(2n-1)!!$.  Then
\begin{equation}
    [s^n]\widehat A_{z_0}(s)
    \sim
    -\frac{M'(z_0)}{z_0}
    \exp\!\left[-\frac{z_0M''(z_0)}{2M'(z_0)}\right]
    d_nz_0^{-2n}.
    \label{eq:simple-asymptotic}
\end{equation}
The prefactor is nonzero.  Consequently the cluster Borel transform has
radius $|z_0|^2/2$ in the action variable.  Section~\ref{sec:continuation}
will identify the corresponding action
\begin{equation}
    \xi_0=\frac{z_0^2}{2}.
    \label{eq:simple-action}
\end{equation}
as a nonremovable singularity.
\end{proposition}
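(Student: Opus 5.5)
The plan is to extract the coefficient directly from \eqref{eq:simple-diagonal} by reindexing around the diagonal. Put $G(s)=r(s)/\zeta(s)$ and $\lambda(s)=\zeta(s)/z_0$, so that $\lambda(0)=1$ and $\lambda'(0)=\zeta'(0)/z_0$. Since $z_0\neq0$, both are holomorphic on a fixed disc $|s|\leq\delta$. With $q=n-j$,
\begin{equation*}
    [s^n]\widehat A_{z_0}(s)
    =-\sum_{j=0}^{n} d_{n-j}\,z_0^{-2(n-j)}\,
      [s^j]\Bigl(G(s)\lambda(s)^{-2(n-j)}\Bigr).
\end{equation*}
I will normalize by $d_nz_0^{-2n}$ and show that the $j$-th normalized term converges, for each fixed $j$, to $G(0)\,(-z_0\zeta'(0))^j/j!$. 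I will then justify summing the limits by dominated convergence.

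\emph{Fixed $j$.} Write $\lambda(s)^{-2q}=\exp(-2q\log\lambda(s))$ with $-2q\log\lambda(s)=-2q\lambda'(0)s+q\,O(s^2)$. This is the large-power situation of Lemma~\ref{lem:analytic-large-powers}; the case $\lambda'(0)=0$ is allowed there. Expanding the exponential, the coefficient of $s^j$ in $G\lambda^{-2q}$ equals
\[
  G(0)\,\frac{(-2q\lambda'(0))^j}{j!}+O(q^{j-1}),\qquad q\to\infty .
\]
The error is $O(q^{j-1})$ because every monomial other than the leading one carries fewer factors of $q$ per power of $s$. Since $q=n-j$ and
\[
  \frac{d_{n-j}}{d_n}=\prod_{i=1}^{j}(2n-2i+1)^{-1}\sim(2n)^{-j},
\]
the normalized $j$-th term tends to
\[
  G(0)\,z_0^{2j}\,\frac{(-\lambda'(0))^j}{j!}
  =G(0)\,\frac{(-z_0\zeta'(0))^j}{j!}.
\]
Summing over $j$ gives $G(0)\exp(-z_0\zeta'(0))$. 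By \eqref{eq:zeta-prime} and $G(0)=M'(z_0)/z_0$, this is exactly the prefactor in \eqref{eq:simple-asymptotic}, including the leading minus sign. The prefactor is nonzero because $M'(z_0)\neq0$ and the exponential never vanishes.

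\emph{Domination (main obstacle).} I split the sum into two ranges.
\begin{itemize}
\item For $j\leq n/2$, apply Cauchy's estimate on the circle $|s|=\min(\delta,(j+1)/q)$. On that circle $|\lambda(s)^{-2q}|\leq e^{Cq|s|}\leq e^{C(j+1)}$. Hence $|[s^j](G\lambda^{-2q})|\leq C\,e^{Cj}\max\bigl(\delta^{-j},(q/(j+1))^{j}\bigr)$. Combine this with $d_{n-j}/d_n\leq n^{-j}$, valid for $j\leq n/2$, and with $|z_0|^{2j}$. The normalized term is then bounded by $C(K/j)^j$ plus a term $C(K/(\delta n))^{j}$, which is summable in $j$ uniformly in $n$.
\item For $j>n/2$, we have $q<n/2$. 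Use the fixed-radius bound $|[s^j](G\lambda^{-2q})|\leq C\delta^{-j}K^{q}$ together with $d_q\leq 2^q q!$. Comparing with $d_n\geq n!$ shows that the total contribution of this range is $O(C^n\,(n/2)!/n!)\to0$.
\end{itemize}
Dominated convergence then yields \eqref{eq:simple-asymptotic}. I expect the only delicate point to be keeping the $j\leq n/2$ bound uniform when $j$ is comparable to $n$; the choice of an adaptive Cauchy radius handles this.

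\emph{Radius of the Borel transform.} Finally, the Borel coefficients are $[s^n]\widehat A_{z_0}/n!$. By Stirling,
\[
  \frac{d_n}{n!}=\frac{(2n)!}{2^n(n!)^2}\sim\frac{2^n}{\sqrt{\pi n}} .
\]
The Borel coefficients therefore behave like a nonzero constant times $n^{-1/2}(2/z_0^2)^n$, so the radius of convergence of $B_{z_0}$ is exactly $|z_0|^2/2$. The identification of $\xi_0=z_0^2/2$ as a genuine singularity is deferred to Section~\ref{sec:continuation}, as the statement indicates.
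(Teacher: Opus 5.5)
Your proposal is correct and follows essentially the same route as the paper. Both combine a fixed-$k$ large-powers limit (the paper's Lemma~\ref{lem:analytic-large-powers}), a majorant from a Cauchy circle of radius proportional to $(k+1)/n$, and a super-exponentially small tail for $k$ of order $n$, then sum by dominated convergence to $e^{-z_0\zeta'(0)}$. The differences are cosmetic: you absorb one factor $\zeta^{-1}$ into $G$ (exponent $2q$ instead of $2q+1$), cut at $n/2$ rather than $\varepsilon n$, and bound the tail via $d_q\le 2^q q!$ and $d_n\ge n!$ instead of $d_{n-k}/d_n\le 1/d_k$; your majorant $C(K/j)^j$ should read $C(K/(j+1))^j$ so that it covers $j=0$.
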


\begin{proof}
Put
\begin{equation}
    a(s)=\frac{r(s)}{r(0)},
    \qquad
    h(s)=\log\frac{\zeta(s)}{z_0},
    \qquad
    c=h'(0)=\frac{\zeta'(0)}{z_0}.
\end{equation}
In the coefficient of $s^n$ in \eqref{eq:simple-diagonal}, put $q=n-k$.
After division by $-r(0)d_nz_0^{-2n-1}$, the $k$th summand is
\begin{equation}
    T_{n,k}=\frac{d_{n-k}}{d_n}z_0^{2k}
    [s^k]a(s)e^{-N_{n,k}h(s)},
    \qquad N_{n,k}=2(n-k)+1.
    \label{eq:simple-normalized-summand}
\end{equation}
Lemma~\ref{lem:analytic-large-powers} gives, for every fixed $k$,
\begin{equation}
    T_{n,k}\longrightarrow\frac{(-cz_0^2)^k}{k!},
    \label{eq:simple-fixed-k-limit}
\end{equation}
including the zero-drift case $c=0$.  Its majorant and tail estimates justify
dominated convergence for this triangular array.  Therefore
\begin{equation}
    \sum_{k=0}^nT_{n,k}\longrightarrow e^{-cz_0^2}.
\end{equation}
Using $r(0)=M'(z_0)$ and \eqref{eq:zeta-prime} yields
\eqref{eq:simple-asymptotic}.

Finally,
\begin{equation}
    \frac{d_n}{n!}
    =\frac{(2n)!}{2^n(n!)^2}
    \sim\frac{2^n}{\sqrt{\pi n}}.
    \label{eq:double-factorial-central-binomial}
\end{equation}
After the rescaling $\xi=(z_0^2/2)x$, the Borel coefficients are a nonzero
constant times $n^{-1/2}(1+o(1))$, proving the asserted radius.
\end{proof}

\begin{remark}[The binary zero-drift check]
For $M(z)=\cosh z$ one has $Q(s,z)=e^{-s/2}\cosh z$ and
$\zeta(s)\equiv z_0$ at every zero of $\cosh$, so $c=0$.  The additive
estimate in Lemma~\ref{lem:analytic-large-powers} still applies.  A
multiplicative error relative to $(-2nc)^k/k!$ would not be meaningful in
this case.
\end{remark}

\begin{example}[Symmetric binary input]
\label{ex:rademacher}
If $\Pp\{X=1\}=\Pp\{X=-1\}=1/2$, then $M(z)=\cosh z$.  Its nearest zeros
are $z_0=\pm i\pi/2$, so the nearest Borel action is
\begin{equation}
    \xi_0=-\frac{\pi^2}{8}.
\end{equation}
Thus the zero-SNR series is alternating at leading order, Gevrey--1, and
divergent.  This recovers the classical nonanalyticity of the binary-input
MMSE at zero and identifies its controlling complex action.
\end{example}

\section{Multiple zeros and Hermite splitting}
\label{sec:multiple}

Suppose $z_0$ is a zero of $M$ of multiplicity $m\geq2$.  Write locally
\begin{equation}
    M(z_0+w)=aw^m(1+bw+O(w^2)),
    \qquad a\neq0.
    \label{eq:multiple-germ}
\end{equation}
Set $s=\tau^2$ and $w=\tau x$.  Applying the backward heat operator to the
Taylor germ yields
\begin{equation}
    Q(\tau^2,z_0+\tau x)
    =a\tau^m\left[
      \He_m(x)+b\tau\He_{m+1}(x)+O(\tau^2)
    \right],
    \label{eq:hermite-normal-form}
\end{equation}
To verify \eqref{eq:hermite-normal-form}, use the polynomial identity
\begin{align}
 e^{-(\tau^2/2)\partial_w^2}w^\ell
 &=\sum_{j=0}^{\lfloor\ell/2\rfloor}
   \frac{(-1)^j}{2^jj!}\frac{\ell!}{(\ell-2j)!}
   \tau^{2j}w^{\ell-2j} \\
 &=\tau^\ell\He_\ell(w/\tau).
 \label{eq:heat-hermite-identity}
\end{align}
For related connections between backward heat flow and Hermite-polynomial
zero dynamics, see \cite{Kabluchko2024}.
The terms $aw^m$ and $abw^{m+1}$ in
\eqref{eq:multiple-germ} therefore become
$a\tau^m\He_m(x)$ and $ab\tau^{m+1}\He_{m+1}(x)$, respectively.  Every
remaining Taylor monomial has degree at least $m+2$ and contributes
$O(\tau^{m+2})$ locally uniformly for bounded $x$, which proves the stated
normal form.
Here $\He_m$ is the probabilists' Hermite polynomial.  Its roots
\begin{equation}
    r_1<\cdots<r_m
\end{equation}
are real and simple.  The implicit-function theorem in $(\tau,x)$ gives the
split zeros
\begin{equation}
    \zeta_j(s)=z_0+r_j\sqrt{s}+bs+O(s^{3/2}).
    \label{eq:split-zeros}
\end{equation}
For the coefficient calculation, write
$x_j(\tau)=r_j+c_j\tau+O(\tau^2)$ and substitute into the bracket in
\eqref{eq:hermite-normal-form}.  The coefficient of $\tau$ is
\begin{equation}
 c_j\He_m'(r_j)+b\He_{m+1}(r_j).
\end{equation}
The Hermite identities
$\He_{m+1}(x)=x\He_m(x)-m\He_{m-1}(x)$ and
$\He_m'(x)=m\He_{m-1}(x)$ imply
$\He_{m+1}(r_j)=-\He_m'(r_j)$.  Since the roots are simple,
$\He_m'(r_j)\neq0$, and the vanishing of the displayed coefficient gives
$c_j=b$.  This proves \eqref{eq:split-zeros}.

The corresponding residues of $F$ are
\begin{equation}
    R_j(s)=Q_z(s,\zeta_j(s))
    =a\He_m'(r_j)s^{(m-1)/2}(1+O(\sqrt{s})).
    \label{eq:split-residues}
\end{equation}
Indeed, a simple zero of $Q$ at $z=\zeta_j(s)$ admits a factorization
\begin{equation}
 Q(s,z)=(z-\zeta_j(s))q_j(s,z),
 \qquad q_j(s,\zeta_j(s))=Q_z(s,\zeta_j(s)),
\end{equation}
and hence
\begin{equation}
 \frac{Q_z(s,z)^2}{Q(s,z)}
 =\frac{Q_z(s,\zeta_j(s))}{z-\zeta_j(s)}+O(1).
 \label{eq:residue-factorization}
\end{equation}
Differentiating \eqref{eq:hermite-normal-form} with respect to
$z=z_0+\tau x$ contributes the factor $\tau^{-1}$ and gives
\eqref{eq:split-residues}.
The individual branches use $\sqrt{s}$, but their cluster sum is
single-valued and holomorphic in $s$ by \eqref{eq:cluster-principal-part}.

\subsection{The homogeneous normal form}

For the leading germ $M(z_0+w)=aw^m$,
\begin{equation}
    Q=a s^{m/2}\He_m(w/\sqrt{s})
\end{equation}
and
\begin{equation}
    F
    =a s^{(m-2)/2}
      \frac{\He_m'(w/\sqrt{s})^2}{\He_m(w/\sqrt{s})}
    =a\sum_{k\geq0}c_{m,k}s^kw^{m-2-2k}.
    \label{eq:homogeneous-F}
\end{equation}
Partial fractions give
\begin{equation}
    \frac{\He_m'(x)^2}{\He_m(x)}
    =P_m(x)+\sum_{j=1}^m\frac{\He_m'(r_j)}{x-r_j},
    \label{eq:hermite-partial-fractions}
\end{equation}
so for all sufficiently large $k$,
\begin{equation}
    c_{m,k}=\sum_{j=1}^m
      \He_m'(r_j)r_j^{2k-m+1}.
    \label{eq:cmk}
\end{equation}
Let $r_m>0$ be the largest Hermite root.  Parity makes the contributions of
$r_m$ and $-r_m$ equal, and the remaining roots have smaller modulus.  Hence
\begin{equation}
    c_{m,k}\sim
    2\He_m'(r_m)r_m^{2k-m+1}>0.
    \label{eq:cmk-asymptotic}
\end{equation}
This already predicts the scale of the later saddle.  More explicitly, in
the homogeneous model the positive extreme branch has
$\alpha(\tau)=a\He_m'(r_m)\tau^{m-1}$ and
$\beta(\tau)=z_0+r_m\tau$.  With $p=n-k$ and
$L=2k-m+1\geq0$, the absolute value of its $k$th contribution, apart from a
$k$-independent factor, is
\begin{equation}
 S_{n,k}=d_p r_m^L\frac{(2p+1)_L}{L!},
 \label{eq:homogeneous-summand}
\end{equation}
where $(N)_L=N(N+1)\cdots(N+L-1)$ and $(N)_0=1$.  This follows from
$[\sigma^L](1+r_m\sigma)^{-N}=(-r_m)^L(N)_L/L!$.
For $p\geq1$, cancellation of the rising factorials gives the exact ratio
\begin{align}
 \frac{S_{n,k+1}}{S_{n,k}}
 &=\frac{2p\,r_m^2}{(L+1)(L+2)} \\
 &=\frac{r_m^2n}{2k^2}
   \left(1+O(n^{-1/2})\right),
 \label{eq:homogeneous-ratio}
\end{align}
where the second line is uniform when $k/\sqrt n$ ranges over a compact
subset of $(0,\infty)$.  The exact ratio is strictly decreasing in $k$
on the nonzero range of summands.  Its crossing of $1$ therefore locates
the saddle at
\begin{equation}
    k_*=r_m\sqrt{n/2}+O(1).
    \label{eq:multiple-saddle}
\end{equation}
The rising factorial cannot be replaced by $N^L$ when computing the leading
constant: for $L=O(\sqrt N)$,
\[
 \log\frac{(N)_L}{N^L}
 =\frac{L(L-1)}{2N}+O\!\left(\frac{L^3}{N^2}\right).
\]
At the saddle this correction tends to $r_m^2/2$.  It is retained below by
the quadratic term in Lemma~\ref{lem:uniform-extraction}; in the homogeneous
case that lemma has $B=0$ and $D=-r_m^2/2$.

\subsection{A uniform large-powers estimate}

The next lemma controls perturbations of the homogeneous germ.  It is the
main local asymptotic input for multiple zeros.  Its saddle-point method is
classical \cite{deBruijn1981,Olver1997,Wong2001}; the precise uniform
formulation below is tailored to the Hermite-split zero cluster considered
in this paper.

\begin{lemma}[Uniform Cauchy saddle]
\label{lem:uniform-extraction}
Let
\begin{equation}
    g(\sigma)=1+r\sigma+B\sigma^2+O(\sigma^3),
    \qquad r>0,
\end{equation}
be holomorphic and nonzero on $|\sigma|<\rho_0$.  Let $N\to\infty$ and let
$L=L_N$ be integers satisfying
\begin{equation}
    c_0\sqrt N\leq L\leq C_0\sqrt N
\end{equation}
for fixed $0<c_0<C_0<\infty$.  Put $D=B-r^2/2$.  Then, uniformly in this
range,
\begin{align}
    [\sigma^L]g(\sigma)^{-N}
    &=\frac{(-Nr)^L}{L!}
      \exp\!\left(-\frac{DL^2}{Nr^2}\right)
      \left(1+O(N^{-1/4})\right).
    \label{eq:uniform-extraction}
\end{align}
The same formula holds after multiplication by a holomorphic factor
$q(\sigma)$ with $q(0)=q_0\neq0$, with the additional factor $q_0$ on the
right.
\end{lemma}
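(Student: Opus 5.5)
The plan is to write the coefficient as a Cauchy integral and evaluate it by a saddle point whose location moves with $N$. Throughout, $L/N\to0$ and the saddle sits at radius $\sigma_*\asymp L/N\asymp N^{-1/2}$. Set
\[
    I=[\sigma^L]g(\sigma)^{-N}=\frac{1}{2\pi i}\oint g(\sigma)^{-N}\sigma^{-L-1}\dd\sigma,
    \qquad \log g(\sigma)=r\sigma+D\sigma^2+O(\sigma^3).
\]
The expansion of $\log g$ holds because $B-r^2/2=D$. The exponent is $\Phi(\sigma)=-N\log g(\sigma)-L\log\sigma$. Since the factor $(-1)^L$ is expected, I will substitute $\sigma=-\eta$. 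The saddle equation $-Ng'(\sigma)/g(\sigma)=L/\sigma$ then gives $\eta_*=\lambda+O(\lambda^2)$ with $\lambda=L/(Nr)$, and the second correction is determined by $D$.

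\textbf{Step 1 (reduction to a model).} Write
\[
    g(-\eta)^{-N}=e^{Nr\eta}\,e^{-ND\eta^2}\,e^{N\,O(\eta^3)}.
\]
On the circle $|\eta|=\lambda$, the last factor is $1+O(N\lambda^3)=1+O(N^{-1/2})$, and $N\lambda^2=L^2/(Nr^2)=O(1)$. Thus $e^{-ND\eta^2}$ is a bounded, slowly varying factor. Rather than locating the exact saddle, I take the circle $|\eta|=\lambda$ and split the contour into an arc $|\arg\eta|\le N^{-1/4}\cdot(\text{const})$ and its complement.

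On the complement, $\operatorname{Re}(Nr\eta)=Nr\lambda\cos\theta=L\cos\theta$ decays relative to the peak like $e^{-cL\theta^2}$. This is at least $e^{-cN^{1/2}N^{-1/2}}$, so I need a slightly wider arc. I will take the arc $|\theta|\le L^{-1/2}\log L$. The tails are then $O(e^{-c\log^2L})$ relative to the peak, while the bounded factor $e^{-ND\eta^2}$ cannot compensate because $N\lambda^2=O(1)$.

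\textbf{Step 2 (central arc).} On the arc, $\eta=\lambda e^{i\theta}$, so
\[
    e^{-ND\eta^2}=e^{-ND\lambda^2}\bigl(1+O(\theta)\bigr)=e^{-DL^2/(Nr^2)}\bigl(1+O(L^{-1/2}\log L)\bigr).
\]
Here $L^{-1/2}\log L=O(N^{-1/4}\log N)$. To get the clean $O(N^{-1/4})$, I will expand $e^{-ND\eta^2}$ to first order in $\theta$. The linear term is odd in $\theta$ against the Gaussian-like weight, so it integrates to a lower-order contribution, and the error becomes $O(\theta^2)=O(L^{-1})$ plus $O(N^{-1/2})$. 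The remaining integral $\frac{1}{2\pi i}\oint e^{Nr\eta}\eta^{-L-1}\dd\eta$ is exactly $(Nr)^L/L!$.

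More precisely, I reverse the order: I compare $I$ with $(-1)^L\,e^{-ND\lambda^2}\,[\eta^L]e^{Nr\eta}$. The difference is controlled by integrating $|e^{Nr\eta}\eta^{-L-1}|\,|E(\eta)|$ over the circle, where
\[
    E(\eta)=e^{-ND(\eta^2-\lambda^2)+N\,O(\eta^3)}-1.
\]
Against the same circle, $\oint|e^{Nr\eta}\eta^{-L-1}|$ is comparable to $(Nr)^L/L!$ times $O(1)$, by Stirling and Laplace in $\theta$. The odd part of $E$ in $\theta$ needs the cancellation noted above, which I will obtain by symmetrizing $\theta\leftrightarrow-\theta$.

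\textbf{Step 3 (factor $q$).} Writing $q(-\eta)=q_0(1+O(\eta))$ with $O(\eta)=O(N^{-1/2})$, the same comparison gives the additional factor $q_0$.

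\textbf{Main obstacle.} The delicate point is the relative error bound. A crude absolute bound on $E$ gives only $O(N^{-1/4}\log N)$ or worse, so the symmetrization argument that kills the linear $\theta$ term must be done carefully and uniformly in $L\in[c_0\sqrt N,C_0\sqrt N]$. Uniformity itself is harmless, since all constants depend only on $c_0$, $C_0$, $r$, $\rho_0$, and the $O(\sigma^3)$ bound.
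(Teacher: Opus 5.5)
Your proposal is correct and is essentially the paper's argument: a Cauchy integral on the circle $\sigma=-\rho e^{i\theta}$ with $\rho=L/(Nr)$, the quadratic factor $e^{-ND\sigma^2}$ frozen at its value $e^{-DL^2/(Nr^2)}$, and the arcs bounded by $e^{L(\cos\theta-1)}$. The one change is that you compare with the exact model coefficient $[\eta^L]e^{Nr\eta}=(Nr)^L/L!$ on the same circle, which is a tidy substitute for the paper's Gaussian integration followed by Stirling.

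The symmetrization you single out as the main obstacle is unnecessary. It would also need care, because the weight $e^{L(e^{i\theta}-1-i\theta)}$ is only conjugate-symmetric in $\theta$, not even. Integrating the uniform bound $|E|\le C(|\theta|+N^{-1/2})$ against $e^{L(\cos\theta-1)}$ over the whole circle already gives relative error $O(L^{-1/2}+N^{-1/2})=O(N^{-1/4})$, with no logarithm. That is exactly how the paper gets its $O(N^{-1/4})$.
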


\begin{proof}
Set $\rho=L/(Nr)$ and use the circle
\begin{equation}
    \sigma=-\rho e^{i\theta},
    \qquad -\pi\leq\theta\leq\pi.
    \label{eq:saddle-circle}
\end{equation}
Since $\rho=O(N^{-1/2})$, the circle lies inside the holomorphy disc for
large $N$.  Cauchy's formula gives
\begin{equation}
    [\sigma^L]g(\sigma)^{-N}
    =\frac{(-1)^L\rho^{-L}}{2\pi}
      \int_{-\pi}^{\pi}
      \exp\{-iL\theta-N\log g(-\rho e^{i\theta})\}\dd\theta.
    \label{eq:uniform-cauchy-integral}
\end{equation}
Uniformly on this circle,
\begin{equation}
    \log g(\sigma)=r\sigma+D\sigma^2+O(\sigma^3),
\end{equation}
so, after removing the constant $L$, the angular phase is
\begin{equation}
    L(e^{i\theta}-1-i\theta)
    -ND\rho^2e^{2i\theta}+O(N\rho^3).
    \label{eq:uniform-angular-phase}
\end{equation}
Here $N\rho^3=O(N^{-1/2})$ and $ND\rho^2=O(1)$.

On $|\theta|\leq L^{-2/5}$,
\begin{equation}
    L(e^{i\theta}-1-i\theta)
    =-\frac{L\theta^2}{2}+O(L|\theta|^3).
\end{equation}
The quadratic perturbation in \eqref{eq:uniform-angular-phase} may be
replaced by $-ND\rho^2$ with relative error $O(L^{-1/2})$ on the effective
window $|\theta|=O(L^{-1/2})$.  Gaussian integration gives
\begin{equation}
    [\sigma^L]g(\sigma)^{-N}
    =(-1)^L\rho^{-L}\frac{e^L}{\sqrt{2\pi L}}
      e^{-ND\rho^2}
      \left(1+O(L^{-1/2}+N^{-1/2})\right).
\end{equation}
For $L^{-2/5}<|\theta|\leq\theta_0$ the real part drops by at least
$cL\theta^2$, while for $\theta_0\leq|\theta|\leq\pi$ it drops by at least
$cL$.  These arcs are exponentially smaller than the central contribution.
Finally, Stirling's formula and $\rho=L/(Nr)$ give
\begin{equation}
    \rho^{-L}\frac{e^L}{\sqrt{2\pi L}}
    =\frac{(Nr)^L}{L!}\left(1+O(L^{-1})\right),
    \qquad
    ND\rho^2=\frac{DL^2}{Nr^2}.
\end{equation}
This proves \eqref{eq:uniform-extraction}.  Since
$q(-\rho e^{i\theta})=q_0(1+O(N^{-1/2}))$ on the saddle circle, the assertion
with $q$ follows as well.
\end{proof}

\subsection{The multiple-zero action}

On the $\tau$ plane, the cluster principal part has the form
\begin{equation}
    P(\tau^2,z)=\sum_{j=1}^m
      \frac{\alpha_j(\tau)}{z-\beta_j(\tau)},
    \label{eq:multiple-principal-part}
\end{equation}
where
\begin{equation}
    \alpha_j(\tau)
    =a\He_m'(r_j)\tau^{m-1}(1+O(\tau)),
    \qquad
    \beta_j(\tau)
    =z_0+r_j\tau+b\tau^2+O(\tau^3).
    \label{eq:alpha-beta}
\end{equation}
The sum is invariant under $\tau\mapsto-\tau$.  If $u_n$ denotes its
$n$th diagonal coefficient, then exactly
\begin{equation}
    u_n=-\sum_{k=0}^nd_{n-k}[\tau^{2k}]
      \sum_{j=1}^m
      \alpha_j(\tau)\beta_j(\tau)^{-2(n-k)-1}.
    \label{eq:un-exact}
\end{equation}

\begin{proposition}[Multiple-zero coefficient asymptotics]
\label{prop:multiple-asymptotics}
Let $z_0$ have multiplicity $m\geq2$ and let $r_m$ be the largest root of
$\He_m$.  There is a constant $K_{z_0,m}\neq0$ such that
\begin{equation}
    u_n
    =K_{z_0,m}d_nz_0^{-2n}
      n^{-(m-1)/2}e^{r_m\sqrt{2n}}
      \bigl(1+o(1)\bigr).
    \label{eq:multiple-asymptotic}
\end{equation}
Consequently the corresponding cluster Borel transform has radius
$|z_0|^2/2$.
\end{proposition}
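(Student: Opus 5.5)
We will start from the exact formula \eqref{eq:un-exact} and split the inner sum over branches $j$.  In each branch we factor $\beta_j(\tau)=z_0\,g_j(\sigma)$ with $\sigma=\tau/z_0$ and
\[
  g_j(\sigma)=1+r_j\sigma+\frac{b}{z_0}\,z_0^2\sigma^2+O(\sigma^3),
\]
and $\alpha_j(\tau)=a\He_m'(r_j)z_0^{m-1}\sigma^{m-1}q_j(\sigma)$ with $q_j(0)=1$.  With $N=2(n-k)+1$ and $L=2k-m+1$, the $k$th summand for branch $j$ becomes, up to the common factor $-a z_0^{-2n-1}z_0^{m-1}\cdots$, the product
\[
  d_{n-k}\,\He_m'(r_j)\,z_0^{2k}\,[\sigma^{L}]\,q_j(\sigma)g_j(\sigma)^{-N}.
\]
This reduces $u_n$ to a triangular sum of Cauchy coefficients of large powers.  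Pairing $r_j$ with $-r_j$ is harmless, since the branch sum is even in $\tau$ and only even $\tau$-powers are extracted.  Normalising by $d_nz_0^{-2n}$ produces the ratio $d_{n-k}/d_n\approx(2n)^{-k}$, so the summand is essentially $(2n)^{-k}z_0^{2k}[\sigma^L]g_j^{-N}$.

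\textbf{Step 1: localisation of $k$.}  We show that only the window $k\in[c_0\sqrt n,\,C_0\sqrt n]$ contributes to leading order, and that within it the branches $j=m$ and its mirror $-r_m$ dominate.  The homogeneous computation \eqref{eq:homogeneous-summand}--\eqref{eq:multiple-saddle} suggests the comparison.  The exact coefficients are bounded by a Cauchy estimate on a circle of radius $\rho'$ slightly larger than the saddle radius, which gives the majorant $(N|r_j|+\epsilon)^L/L!$ times $e^{O(L^2/N)}$.  Multiplied by $d_{n-k}/d_n$, this yields a summand majorised by a log-concave profile $\exp\bigl(L\log(\text{const}\cdot\sqrt n/L)\bigr)$.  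That profile is maximised at $k\asymp\sqrt n$ and decays like a Gaussian/Poisson tail away from it.  The range $k\le c_0\sqrt n$ and the range $k\ge C_0\sqrt n$, where $L\gg\sqrt N$ and the Cauchy majorant on a fixed small circle gives super-exponential decay relative to $e^{r_m\sqrt{2n}}$, are therefore negligible.  Branches with $|r_j|<r_m$ give at most $e^{|r_j|\sqrt{2n}}$ and are exponentially smaller in $\sqrt n$.

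\textbf{Step 2: the window.}  Inside the window we apply Lemma~\ref{lem:uniform-extraction} to $g_m$ with $r=r_m$, $B=b z_0$ (after the rescaling), and $q=q_m$.  This gives
\[
  [\sigma^L]q_mg_m^{-N}=\frac{(-Nr_m)^L}{L!}\,e^{-DL^2/(Nr_m^2)}\bigl(1+O(N^{-1/4})\bigr).
\]
The mirror branch $-r_m$ gives the same value, because $L$ has the parity that makes the two contributions add; this is the factor $2$ in \eqref{eq:cmk-asymptotic}.  Combining with $d_{n-k}/d_n$ via Stirling, the summand becomes
\[
  \frac{(r_m^2/2)^{k}\,n^{k}\cdot(\text{powers})}{(2k-m+1)!}\,e^{\Phi(k/\sqrt n)}.
\]
Here $\Phi$ collects the $D$ correction and the $e^{-k^2/n}$-type correction from $d_{n-k}/d_n=(2n)^{-k}e^{k^2/(2n)+O(k^3/n^2)}$.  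Summing over $k$ then becomes a Laplace sum around $k_*=r_m\sqrt{n/2}$ of width $n^{1/4}$.  Evaluating it with Stirling on $L!$ produces $e^{r_m\sqrt{2n}}$ times a power of $n$ times a constant $e^{\text{const}}$.  The power is $n^{-(m-1)/2}$ after accounting for $L=2k-m+1$ and the Gaussian width.  The resulting constant is an explicit nonzero multiple of $a\He_m'(r_m)z_0^{m-1}$, so $K_{z_0,m}\ne0$, and the $e^{O(1)}$ corrections (including $D$) are finite and nonvanishing.  The factor $e^{r_m\sqrt{2n}}$ and polynomial factors are subexponential, so together with \eqref{eq:double-factorial-central-binomial} they give Borel radius $|z_0|^2/2$.

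\textbf{Main obstacle.}  We expect the uniformity in Step 1 to be the hardest part: the tails $k\ll\sqrt n$ and $k\gg\sqrt n$ must be controlled uniformly, where Lemma~\ref{lem:uniform-extraction} does not apply.  The error $O(N^{-1/4})$ must also survive summation over a window of $\asymp n^{1/4}$ effective terms; this is fine because it is relative and uniform.  For the tails we would first use a fixed-radius Cauchy bound with radius chosen as $\rho=L/(N(r_m+\epsilon))$.  This gives a monotone ratio argument like \eqref{eq:homogeneous-ratio}, yielding geometric decay away from the window.
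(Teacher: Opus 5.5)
Your plan is the paper's proof in outline. You rescale $\tau=z_0\sigma$ in \eqref{eq:un-exact} and apply Lemma~\ref{lem:uniform-extraction} with $N=2(n-k)+1$, $L=2k-m+1$ for $k\asymp\sqrt n$. You discard the remaining $k$ and the branches with $|r_j|<r_m$ through Cauchy majorants governed by the concave phase \eqref{eq:phase}. You then evaluate a discrete Gaussian sum of width $n^{1/4}$ about $k_*=r_m\sqrt{n/2}$, doubled by the mirror root $-r_m$.

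Three bookkeeping points need fixing. \emph{First}, after the rescaling no factor $z_0^{2k}$ survives: your common factor $z_0^{-2n+m-2}$ already absorbs it, so the constant carries $z_0^{m-2}$. \emph{Second}, the window summand is of order $(2r_m^2n)^k/(2k-m+1)!$ up to factors $n^{O(1)}e^{O(1)}$, not $(r_m^2n/2)^k/(2k-m+1)!$; the latter would move the saddle to $r_m\sqrt{n/8}$. \emph{Third}, a fixed Cauchy circle controls only $k>\varepsilon n$, using $d_{n-k}/d_n\le 1/d_k$. On $C_0\sqrt n\le k\le\varepsilon n$ you must use the adapted radius $L/(N(r_m+\delta))$ from your last paragraph, with $[c_0,C_0]$ chosen so that $\sup_{x\notin[c_0,C_0]}\Phi_{r_m+\delta}(x)<r_m\sqrt2$.
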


\begin{proof}
Fix a positive Hermite root $r$ and substitute $\tau=z_0\sigma$ in its
summand in \eqref{eq:un-exact}.  From \eqref{eq:alpha-beta}, for a constant
$B_r$ depending on the branch,
\begin{equation}
    \frac{\beta_r(z_0\sigma)}{z_0}
    =1+r\sigma+B_r\sigma^2+O(\sigma^3).
    \label{eq:beta-expansion}
\end{equation}
With $p=n-k$ and $L=2k-m+1$, one obtains
\begin{align}
 & [\tau^{2k}]\alpha_r(\tau)\beta_r(\tau)^{-2p-1}
 \nonumber\\
 &\quad =a\He_m'(r)z_0^{-2n+m-2}
 [\sigma^L](1+O(\sigma))
 (1+r\sigma+B_r\sigma^2+O(\sigma^3))^{-2p-1}.
 \label{eq:branch-coefficient}
\end{align}
If $L<0$, the coefficient is exactly zero because the residue starts with
$\tau^{m-1}$; thus only $L\geq0$ remains.  Let $T_{n,k}^{(r)}$ denote the
branch contribution, including the factor $d_p$ and its sign.

For $k$ in a fixed compact multiple of $\sqrt n$, Lemma
\ref{lem:uniform-extraction}, with $N=2p+1$, gives uniformly
\begin{align}
 \eqref{eq:branch-coefficient}
 &=a\He_m'(r)z_0^{-2n+m-2}
 \frac{(-(2p+1)r)^L}{L!}
 \nonumber\\
 &\quad\times
 \exp\!\left[-\frac{(B_r-r^2/2)L^2}{(2p+1)r^2}\right]
 \left(1+O(n^{-1/4})\right).
 \label{eq:branch-uniform}
\end{align}

We next establish the global tail estimates, keeping track of the
double-factorial ratio that determines the correct phase.  For
$0\leq k<n$, the product formula for $d_n$ gives the exact identity
\begin{align}
 \log\frac{d_{n-k}}{d_n}
 &=-k\log(2n)
   -\sum_{j=0}^{k-1}
    \log\left(1-\frac{2j+1}{2n}\right).
    \label{eq:double-factorial-ratio-exact}
\end{align}
Consequently, for every sufficiently small fixed $\varepsilon>0$,
Taylor's formula with a uniform remainder yields
\begin{equation}
 \log\frac{d_{n-k}}{d_n}
 =-k\log(2n)+\frac{k^2}{2n}
   +O\left(\frac{k^3}{n^2}\right),
 \qquad 0\leq k\leq\varepsilon n.
 \label{eq:double-factorial-ratio-uniform}
\end{equation}
Indeed,
$\sum_{j<k}(2j+1)=k^2$, while
$\sum_{j<k}(2j+1)^2=O(k^3)$; choosing $\varepsilon<1/4$, say, keeps every
argument of the logarithm uniformly away from zero.

Fix now $\delta>0$, and first choose $0<\delta_0<\delta$.  Put
\begin{equation}
    g_r(\sigma)
    =1+r\sigma+B_r\sigma^2+O(\sigma^3).
\end{equation}
After decreasing the underlying disc, the Taylor expansion of the logarithm
gives
\begin{equation}
    |\log g_r(\sigma)|\leq(r+\delta_0)|\sigma|.
\end{equation}
Cauchy's formula on
\begin{equation}
    |\sigma|=\frac{L}{(2p+1)(r+\delta_0)}
\end{equation}
for $1\leq L\leq2\varepsilon n+O(1)$ yields, after choosing $\varepsilon$
small enough that the circle stays in the underlying disc and applying
Stirling's formula,
\begin{equation}
 \left|[\sigma^L](1+O(\sigma))
 (1+r\sigma+O(\sigma^2))^{-2p-1}\right|
 \leq C\sqrt{L+1}\frac{((2p+1)(r+\delta_0))^L}{L!}.
 \label{eq:global-cauchy-bound}
\end{equation}
The case $L=0$ satisfies the same bound directly.  Indeed, the maximum-modulus
estimate for $L\geq1$ gives
$C(e(2p+1)(r+\delta_0)/L)^L$, and Stirling's formula produces the displayed
square-root factor.  Its logarithm is $O(\log n)$ throughout this range.
We combine \eqref{eq:double-factorial-ratio-uniform} and
\eqref{eq:global-cauchy-bound}.  Since
$L=2k-m+1$ and $2p+1=2(n-k)+1$, Stirling's formula gives
\begin{align}
 \log L!&=L\log L-L+O(\log n), \\
 \log\frac{2p+1}{2n}
 &=-\frac{k}{n}
   +O\left(\frac1n+\frac{k^2}{n^2}\right).
 \label{eq:phase-elementary-expansions}
\end{align}
Substituting these two relations and
\eqref{eq:double-factorial-ratio-uniform} into the logarithm of the right
side of \eqref{eq:global-cauchy-bound}, and replacing
$L$ by $2k+O(1)$, gives
\begin{align}
 \log\frac{|T_{n,k}^{(r)}|}{d_n|z_0|^{-2n}}
 &\leq
 2k\left(1+\log\frac{(r+\delta_0)\sqrt n}
                         {\sqrt2k}\right)
 +O\left(\frac{k^2}{n}+\frac{k^3}{n^2}+\log n\right).
 \label{eq:phase-upper-preliminary}
\end{align}
The implicit constant is uniform for $k\leq\varepsilon n$.  Choose
$\varepsilon$ sufficiently small, depending on $\delta-\delta_0$.
Then the two error terms containing $k$ are absorbed by
\begin{equation}
 2k\log\frac{r+\delta}{r+\delta_0},
\end{equation}
and we obtain, uniformly for $k\leq\varepsilon n$,
\begin{equation}
 \log\frac{|T_{n,k}^{(r)}|}{d_n|z_0|^{-2n}}
 \leq\sqrt n\,\Phi_{r+\delta}(k/\sqrt n)+O(\log n),
 \label{eq:phase-upper}
\end{equation}
where
\begin{equation}
    \Phi_r(x)=2x\left(1+\log\frac{r}{\sqrt2x}\right).
    \label{eq:phase}
\end{equation}
We set $\Phi_r(0)=0$, its continuous limiting value.
For $k>\varepsilon n$, a fixed Cauchy circle and
$d_{n-k}/d_n\leq1/d_k$ give an $e^{-cn\log n}$ bound after normalization by
$d_n|z_0|^{-2n}$.  Hence no $k$ of order $n$ contributes.

The phase is strictly concave:
\begin{equation}
    \Phi_r'(x)=2\log\frac{r}{\sqrt2x},
    \qquad
    \Phi_r''(x)=-\frac2x.
\end{equation}
It has its unique maximum at $x_r=r/\sqrt2$, with value $r\sqrt2$.
Here is the precise localization consequence.  Choose
$0<a<x_r<b$.  Strict maximality gives a number $c_0>0$ and a sufficiently
large fixed $M>b$ such that
\begin{equation}
 \sup_{x\in[0,a]\cup[b,M]}\Phi_r(x)
 \leq r\sqrt2-3c_0,
 \qquad
 \sup_{x\geq M}\Phi_{r+1}(x)
 \leq r\sqrt2-3c_0.
\end{equation}
By continuity, $\delta>0$ may be chosen so small that the first inequality
remains true with $\Phi_{r+\delta}$ and right side
$r\sqrt2-2c_0$.  The second inequality controls all $x\geq M$, because
$\Phi_{r+\delta}(x)\leq\Phi_{r+1}(x)$ when $\delta<1$.  Therefore
\eqref{eq:phase-upper}, together with the $k>\varepsilon n$ estimate, makes
every $k/\sqrt n\notin[a,b]$ exponentially smaller than
$e^{r\sqrt{2n}}$.  Inside $[a,b]$, strict concavity and
\eqref{eq:branch-uniform} show that
\begin{equation}
    |k-r\sqrt{n/2}|>n^{1/4+\epsilon}
\end{equation}
is smaller by $e^{-cn^{2\epsilon}}$ for any fixed
$0<\epsilon<1/4$.

In the central window write
\begin{equation}
    k=r\sqrt{n/2}+yn^{1/4}.
\end{equation}
For completeness, we now calculate the leading constant rather than only
asserting its nonvanishing.  Set
\begin{equation}
    D_r=B_r-\frac{r^2}{2}.
    \label{eq:Dr-def}
\end{equation}
Using $d_j=2^j\Gamma(j+1/2)/\sqrt\pi$, Stirling's expansion applied to
$d_p/d_n$, $(2p+1)^L$, and $L!$ gives, uniformly for bounded $y$,
\begin{align}
 &\frac{d_p}{d_n}\frac{((2p+1)r)^L}{L!}
 \nonumber\\
 &\quad=
 \frac{2^{-m/2-1/4}}{\sqrt{\pi r}}
 n^{-m/2+1/4}e^{r\sqrt{2n}}
 \exp\left(-\frac{\sqrt2}{r}y^2-\frac{3r^2}{4}\right)
 \bigl(1+o(1)\bigr).
 \label{eq:stirling-central-factor}
\end{align}
One way to verify the constant in \eqref{eq:stirling-central-factor} is to
take logarithms and use
\begin{align}
 \log\Gamma(x+c)
 ={}&(x+c-\tfrac12)\log x-x
    +\tfrac12\log(2\pi)
    +\frac{c(c-1)/2+1/12}{x}
    +O(x^{-2}),
 \label{eq:shifted-stirling}
\end{align}
first with $x=n$ and $x=p=n-k$, and then insert
$k=r\sqrt{n/2}+yn^{1/4}$ and $L=2k-m+1$.  The terms of order
$\sqrt n$ give $r\sqrt{2n}$, those of order $y^2$ give
$-\sqrt2y^2/r$, the logarithmic terms give
$(-m/2+1/4)\log n$, and the remaining constant is
$-(m/2+1/4)\log2-\tfrac12\log(\pi r)-3r^2/4$.

Moreover,
\begin{equation}
 \frac{D_rL^2}{(2p+1)r^2}=D_r+o(1),
 \label{eq:quadratic-saddle-limit}
\end{equation}
and the analytic factor $1+O(\sigma)$ in
\eqref{eq:branch-coefficient} tends to $1$ at the saddle
$\sigma=-L/((2p+1)r)=O(n^{-1/2})$.  Finally, the minus sign in
\eqref{eq:un-exact} and $(-1)^L=(-1)^{m-1}$ produce the total sign
$(-1)^m$.  Hence \eqref{eq:branch-uniform} gives
\begin{align}
 T_{n,k}^{(r)}
 &=D_{z_0,m,r}\,d_nz_0^{-2n}
 n^{-m/2+1/4}e^{r\sqrt{2n}}
 \exp\!\left(-\frac{\sqrt2}{r}y^2\right)(1+o(1)),
 \label{eq:local-discrete-gaussian}
\end{align}
where the constant is explicitly
\begin{equation}
 D_{z_0,m,r}
 =(-1)^m a\He_m'(r)z_0^{m-2}
   \frac{2^{-m/2-1/4}}{\sqrt{\pi r}}
   \exp\left(-\frac{3r^2}{4}-D_r\right).
 \label{eq:multiple-local-constant}
\end{equation}
It is nonzero because $a\neq0$, every Hermite zero is simple, $z_0\neq0$,
and the exponential never vanishes.

To justify summation, fix $0<\eta<1/12$.  The same expansion is uniform for
$|y|\leq n^\eta$ with an envelope $Ce^{-cy^2}$, because
\begin{equation}
 \sqrt n\bigl(\Phi_r(x_r+yn^{-1/4})-\Phi_r(x_r)\bigr)
 =-\frac{\sqrt2}{r}y^2+O(|y|^3n^{-1/4}).
\end{equation}
Strict concavity makes $n^\eta<|y|\leq n^\epsilon$ smaller by
$e^{-cn^{2\eta}}$, while the preceding estimates control the rest.  Thus
the central lattice sum is a dominated Riemann sum.

The branches $r_m$ and $-r_m$ are exchanged by $\tau\mapsto-\tau$.
Because the cluster sum is even in $\tau$, their even-coefficient
contributions are equal and add.  Every other nonzero root satisfies
$|r_j|<r_m$, and \eqref{eq:phase-upper} makes it smaller by
\begin{equation}
    \exp\{-(r_m-|r_j|)\sqrt{2n}+o(\sqrt n)\}.
\end{equation}
If $m$ is odd, the root $r_j=0$ has
$\beta_j(\tau)=z_0+O(\tau^2)$.  We spell out why it is negligible.  Put
$q=(m-1)/2$.  The branch fixed by $\tau\mapsto-\tau$ has the even form
\begin{equation}
 \alpha_0(\tau)=\tau^{2q}A(\tau^2),
 \qquad
 \beta_0(\tau)=z_0B(\tau^2),
 \qquad A(0)\neq0,\quad B(0)=1.
 \label{eq:zero-root-even-form}
\end{equation}
Writing $k=q+h$, a Cauchy radius of order $(h+1)/n$ gives
\begin{equation}
 \left|[u^h]A(u)B(u)^{-2(n-k)-1}\right|
 \leq C_0\frac{(C_1n)^h}{h!}
 \qquad (h\leq\varepsilon n).
\end{equation}
The prefactor covers $h=0$, for which the coefficient is $A(0)$.
Together with
$d_{n-k}/d_n\leq(C/n)^{q+h}$, the normalized sum over this range is at most
\begin{equation}
 Cn^{-q}\sum_{h\geq0}\frac{C^h}{h!}=O(n^{-q}).
\end{equation}
For $h>\varepsilon n$, a fixed Cauchy circle and
$d_{n-k}/d_n\leq1/d_k$ give $e^{-c n\log n}$, exactly as in the preceding
large-$k$ estimate.  Thus the zero-root branch has only polynomial size and
is exponentially negligible compared with $e^{r_m\sqrt{2n}}$.

Finally, the lattice spacing in $y$ is $n^{-1/4}$.  Summing
\eqref{eq:local-discrete-gaussian} contributes
\begin{equation}
    n^{1/4}\int_{\R}e^{-\sqrt2y^2/r_m}\dd y\,(1+o(1)).
\end{equation}
The integral equals $\sqrt{\pi r_m/\sqrt2}$.  Since the two extreme-root
amplitudes are equal, the leading constant in
\eqref{eq:multiple-asymptotic} is
\begin{align}
 K_{z_0,m}
 &=2D_{z_0,m,r_m}\sqrt{\frac{\pi r_m}{\sqrt2}}
 \nonumber\\
 &=(-1)^m2^{(1-m)/2}a\He_m'(r_m)z_0^{m-2}
   \exp\left(-\frac{3r_m^2}{4}-D_{r_m}\right),
 \label{eq:multiple-global-constant}
\end{align}
which is nonzero.  This proves \eqref{eq:multiple-asymptotic}.

Using \eqref{eq:double-factorial-central-binomial}, the normalized Borel
coefficients have the form
\begin{equation}
    K'_{z_0,m}n^{-m/2}e^{r_m\sqrt{2n}}(1+o(1)),
    \qquad K'_{z_0,m}\neq0.
\end{equation}
Their $n$th-root limit is one, which proves the asserted radius.
\end{proof}

\begin{example}[A double-zero example with an explicit cluster]
\label{ex:double-rademacher}
Let $X=\varepsilon_1+\varepsilon_2$, where the $\varepsilon_i$ are
independent symmetric signs.  Then
\begin{equation}
 M(z)=\cosh^2z,
 \qquad
 Q(s,z)=\frac12+\frac12e^{-2s}\cosh(2z).
\end{equation}
At $z_0=i\pi/2$, the MGF has multiplicity two and
\begin{equation}
 M(z_0+w)=-w^2(1+O(w^2)),
\end{equation}
so $m=2$, $a=-1$, and $b=0$.  The exact zero equation near $z_0$ is
\begin{equation}
 \cosh(2w)=e^{2s},
 \qquad
 w=\pm\frac12\operatorname{arcosh}(e^{2s})
   =\pm\bigl(\sqrt{s}+O(s^{3/2})\bigr).
\end{equation}
Choose the branch $w(s)=\sqrt{s}+O(s^{3/2})$ and set
$D(s)=e^{-2s}\sinh(2w(s))$.  Direct differentiation of the exact $Q$ gives
residues $-D(s)$ and $D(s)$ at $z_0+w(s)$ and $z_0-w(s)$, respectively.
The principal part of this single cluster is therefore exactly
\[
 P_{z_0}(s,z)
 =-\frac{D(s)}{z-z_0-w(s)}+\frac{D(s)}{z-z_0+w(s)}
 =-\frac{2D(s)w(s)}{(z-z_0)^2-w(s)^2}.
\]
Both $D(s)w(s)$ and $w(s)^2$ are holomorphic in $s$ near zero, with
$D(s)w(s)=2s+O(s^2)$ and $w(s)^2=s+O(s^2)$.  This directly verifies the
local splitting and the leading residues, independently of the general
coefficient asymptotics.

The splitting roots are the roots $\pm1$ of $\He_2$, while
$B_{r_2}=0$ and $D_{r_2}=-1/2$.  Formula
\eqref{eq:multiple-global-constant} gives
\begin{equation}
 K_{z_0,2}=-\sqrt2\,e^{-1/4}\neq0.
\end{equation}
Consequently the cluster coefficients satisfy
\begin{equation}
 u_n\sim-\sqrt2e^{-1/4}d_nz_0^{-2n}n^{-1/2}e^{\sqrt{2n}}.
\end{equation}
The last two displays are consequences of
Proposition~\ref{prop:multiple-asymptotics}, not an independent verification
of its asymptotic constant.  Here $u_n$ refers only to the cluster at
$z_0$; by symmetry, the cluster at $-z_0$ has the same diagonal series.
\end{example}

\begin{corollary}[Borel radius of a zero cluster]
\label{cor:cluster-radius}
For a zero $z_0$ of any multiplicity $m\geq1$, the Borel transform of its
diagonal cluster has ordinary radius of convergence $|z_0|^2/2$.
\end{corollary}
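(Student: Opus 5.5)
This corollary follows from Propositions~\ref{prop:simple-asymptotics} and~\ref{prop:multiple-asymptotics} by a Cauchy--Hadamard computation; the plan is to normalize the Borel coefficients and show their $n$th roots tend to $2/|z_0|^2$. Write $u_n=[s^n]\widehat A_{z_0}(s)$, so that the cluster Borel coefficients are $u_n/n!$. By \eqref{eq:double-factorial-central-binomial}, $d_n/n!=2^n(\pi n)^{-1/2}(1+o(1))$. Hence in both cases
\[
 \frac{u_n}{n!}=K\,\Bigl(\frac{2}{z_0^2}\Bigr)^{n}\,n^{-\gamma}e^{\kappa\sqrt n}\,\bigl(1+o(1)\bigr),
\]
with $K\neq0$. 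The parameters are as follows.
\begin{itemize}
\item For $m=1$: $\gamma=1/2$ and $\kappa=0$. The constant $K$ is $-\pi^{-1/2}(M'(z_0)/z_0)\exp[-z_0M''(z_0)/(2M'(z_0))]$, which is nonzero by Proposition~\ref{prop:simple-asymptotics}.
\item For $m\geq2$: $\gamma=m/2$ and $\kappa=r_m\sqrt2$. The constant is $K=K_{z_0,m}\pi^{-1/2}$, which is nonzero by Proposition~\ref{prop:multiple-asymptotics}.
\end{itemize}

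Taking absolute values and $n$th roots, we use $|K|^{1/n}\to1$, $n^{-\gamma/n}\to1$, $e^{\kappa\sqrt n/n}\to1$, and $|1+o(1)|^{1/n}\to1$. The last limit needs $u_n\neq0$ for large $n$, which holds because the prefactor is nonzero. This gives $\lim_n|u_n/n!|^{1/n}=2/|z_0|^2$. Cauchy--Hadamard then identifies the radius of $B_{z_0}$ as exactly $|z_0|^2/2$: the limit supplies the upper bound on the radius, and it also supplies the matching lower bound.

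The one point needing care is to check that the cluster series $\widehat A_{z_0}$ is the object whose coefficients the two propositions describe. In the simple case it is \eqref{eq:simple-diagonal}. In the multiple case it is $u_n$ from \eqref{eq:un-exact}, the diagonal series of the cluster principal part \eqref{eq:multiple-principal-part}. This is consistent with the definition through $P_j$ in \eqref{eq:cluster-principal-part} when $C_j$ encloses the single cluster coalescing at $z_0$.

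I expect no real obstacle beyond this bookkeeping, since the substantive work (nonvanishing of the prefactor and the subexponential corrections) is already contained in the two propositions. I would state explicitly that the radius statement concerns only the ordinary radius. Whether $\xi_0=z_0^2/2$ is a genuine singularity, and not merely some point on the circle of convergence, is deferred to Section~\ref{sec:continuation}.
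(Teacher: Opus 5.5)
Your proposal is correct and is essentially the paper's argument. Both take the coefficient asymptotics from Propositions~\ref{prop:simple-asymptotics} and~\ref{prop:multiple-asymptotics}, normalize with $d_n/n!\sim 2^n/\sqrt{\pi n}$, and note that the subexponential factors $n^{-m/2}e^{r_m\sqrt{2n}}$ and the nonzero constants have $n$th-root limit one, so Cauchy--Hadamard gives radius exactly $|z_0|^2/2$.
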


\begin{proof}
For $m=1$, combine Proposition~\ref{prop:simple-asymptotics} with
\eqref{eq:double-factorial-central-binomial}.  For $m\geq2$, use
Proposition~\ref{prop:multiple-asymptotics}.  The factors
$n^{-m/2}e^{r_m\sqrt{2n}}$ have $n$th-root limit one, so in both cases the
$n$th-root limit of the absolute Borel coefficients is $2/|z_0|^2$.
\end{proof}

\section{Borel continuation and noncancellation}
\label{sec:continuation}

The coefficient estimates in Sections~\ref{sec:simple} and
\ref{sec:multiple} determine the convergence radius of each cluster Borel
transform, but a finite radius alone does not identify its singular point
or exclude cancellation in the full sum.  We address these two issues
separately.  First, continuation away from $z_0^2/2$ identifies that action
as a singularity of the cluster's Taylor branch.  Second, we select a
singular contribution at one action and continue every contribution at a
different action to a regular germ there.  The only clusters that can share
an action are those at $z_0$ and $-z_0$; their possible cancellation is
handled explicitly.

\subsection{Continuation on the Abel double cover}

The deformation principle used below belongs to the classical
Picard--Lefschetz and relative-homology framework \cite{Pham1983}.  Here the
relevant family is an explicit two-sheeted curve.  We give the contour and
endpoint estimates needed for this family, including the behavior at
infinity and at the degenerate fiber $v=0$.

\begin{lemma}[Local Borel continuation]
\label{lem:heat-pole}
Let $P(s,z)$ be the sum of the principal parts belonging to a zero cluster of
$Q(s,\cdot)$ that coalesces at $z_0\neq0$ when $s=0$.  The Borel transform of
the diagonal series generated by $P$ admits analytic continuation along
every piecewise smooth path in $\C\setminus\{0,z_0^2/2\}$ starting at a
nonzero point of a sufficiently small disc about zero, with the initial
branch given by its Taylor series.  No assertion about return to zero on
other branches is needed here.
\end{lemma}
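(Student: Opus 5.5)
The plan is to make the double-Borel identity of Proposition~\ref{prop:double-borel} explicit for the cluster principal part $P$, using the Cauchy representation \eqref{eq:cluster-principal-part}. This should turn the Borel transform into a finite-dimensional integral whose only possible singularity, away from the excluded origin, is the action $z_0^2/2$.

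\textbf{Step 1: the kernel of a single pole.} For fixed $w$, the circular average of $z\mapsto 1/(z-w)$ is
\[
\frac1{2\pi}\int_0^{2\pi}\frac{\dd\theta}{\sqrt{2v}\cos\theta-w}=-(w^2-2v)^{-1/2},
\]
with the branch equal to $-1/w$ at $v=0$. This is the kernel version of the moving-pole identity stated in the introduction.

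\textbf{Step 2: the partial Borel transform of $P$.} Write $P(s,z)=\sum_k P_k(z)s^k$. Because the cluster sum is holomorphic in $s$, each $P_k(w)$ is holomorphic for $w\neq z_0$ near $z_0$. I claim it is in fact a finite sum of principal parts with poles only at $z_0$. To see this, expand $1/(z-w)$ in \eqref{eq:cluster-principal-part} for $|z-z_0|$ larger than the contour radius; the $s$-coefficients are then polynomials in $(z-z_0)^{-1}$.

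Consequently, for $|v|$ small and any small circle $\gamma_\varepsilon=\{|w-z_0|=\varepsilon\}$,
\[
\cC\widetilde P(u,\cdot)(v)=-\frac1{2\pi i}\oint_{\gamma_\varepsilon}\widetilde F(u,w)\,(w^2-2v)^{-1/2}\dd w .
\]
Here I replace $\widetilde P$ by $\widetilde F$ on the contour, as in \eqref{eq:cluster-principal-part}. The function $\widetilde F(u,w)=\sum_k F_k(w)u^k/k!$ is entire in $u$, locally uniformly for $w$ on $\gamma_\varepsilon$. This follows from Cauchy estimates $|F_k(w)|\le C\delta^{-k}$ on the fixed contour, where $F$ is jointly holomorphic for $|s|<\delta$.

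Because $z_0\neq0$, the factor $(w^2-2v)^{-1/2}$ is holomorphic in $(w,v)$ as long as $\pm\sqrt{2v}$ stays outside the disc $|w-z_0|\le\varepsilon$. Thus the integral is jointly holomorphic in $(u,v)\in\C\times\{v:|\sqrt{2v}\mp z_0|>\varepsilon\}$. Continuing $v$ along a path means continuing the branch of the square root, which is unobstructed except at $v=z_0^2/2$. Along a given path avoiding $z_0^2/2$, I choose $\varepsilon$ smaller than the distance of the path to $z_0^2/2$ measured in $\sqrt{2v}$. This gives continuation of $\cC\widetilde P$ in $v$ along every path in $\C\setminus\{z_0^2/2\}$, on the appropriate double cover, jointly entire in $u$.

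\textbf{Step 3: the $t$-integral as a relative cycle.} Substituting $v=(1-t)\xi$ and $u=\xi-v$, identity \eqref{eq:double-borel} becomes
\[
\cB\widehat A_{z_0}(\xi)=\frac{\dd}{\dd\xi}\int_{0}^{\xi}\cC\widetilde P(\xi-v,\cdot)(v)\dd v .
\]
For $\xi$ in the initial disc the path is the segment. To continue along a path $\lambda$ from $\xi_{\rm init}$ to $\xi$, I replace the segment by the concatenation of the initial segment with $\lambda$. This is a path from $0$ to $\xi$ avoiding $z_0^2/2$, and the square root is continued along it.

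The integrand is entire in the first slot, so the only constraint is on $v$. Homotopies of $\lambda$ with fixed endpoints in $\C\setminus\{0,z_0^2/2\}$ give the same value by Cauchy's theorem, so this defines the analytic continuation. The $\xi$-derivative is harmless: the integrand is holomorphic in $\xi$ and the upper endpoint moves holomorphically.

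The lower endpoint $v=0$ also needs care. On the branch reached by the initial segment, the integrand is holomorphic near $v=0$, because the kernel is regular at $w=z_0\neq0$. Hence nothing degenerates at that fixed endpoint. Excluding $0$ from the path only avoids having to track sheets on which $\sqrt{2v}$ could wind around the origin.

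\textbf{Main obstacle.} I expect the delicate point to be Step 2's uniformity. The contour radius $\varepsilon$ must be shrunk depending on the path, while the growth bound for $F_k$ on that contour must still make the $u$-series entire. The difficulty is that $F$ is only jointly holomorphic near the contour for $|s|<\delta(\varepsilon)$.

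I would handle this by first fixing, for the given compact path, the smallest $\varepsilon$ needed, and then taking $\delta$ accordingly. Finitely many $\varepsilon$-adjustments suffice by compactness. Alternatively, one can use directly that $P_k$ is a finite Laurent polynomial in $(w-z_0)^{-1}$. Then the integral becomes a finite residue computation at $w=z_0$, whose coefficients are derivatives of $(w^2-2v)^{-1/2}$ at $w=z_0$. The remaining work is to bound their sum over $k$ with the weights $u^k/k!$.
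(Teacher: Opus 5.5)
Your proof is correct, and it takes a more elementary route than the paper's.

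\textbf{The paper's route.} It keeps the Abel period over the vanishing cycle $\delta_v$ on $y^2=z^2-2v$. It transports that class in the first-homology local system of $\Sigma_v\setminus\{z=z_0\}$ for $v\neq0,a_0$. It treats the degenerate fiber $v=0$ separately, by decomposing the transported class into a vanishing cycle and puncture loops. Finally, it follows the moving collision $t_*(\xi)=1-a_0/\xi$ in the $t$-plane.

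\textbf{Your route.} Your kernel identity, combined with the Cauchy representation of $\widetilde P$, trades $\delta_v$ once and for all for minus a small loop about $z_0$ on the sheet $y\sim z$. This is legitimate because $\widetilde P(u,z)\,\dd z/y=O(z^{-2})$ at infinity. After that trade:
\begin{itemize}
\item All $v$-dependence sits in the explicit factor $(w^2-2v)^{-1/2}$ on a small disc about $z_0$. Its only monodromy is a sign change around the small set $\{w^2/2\}$ surrounding $a_0$.
\item Writing the $t$-integral as $\int_0^\xi\cdots\dd v$ with $v=(1-t)\xi$ turns the moving collision into the fixed point $v=a_0$. Your concatenated path $[0,\xi_{\mathrm{init}}]\cup\lambda$ is just the image of the paper's $t$-path $L$.
\end{itemize}

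\textbf{What each approach buys.} Your version removes the homology bookkeeping and the separate analysis at $v=0$. In your representation $v=0$ is never a singular fiber, so excluding $0$ from the path is only a convenience. Likewise, your remark about $\sqrt{2v}$ winding around the origin is unnecessary, since $w^2-2v$ has no monodromy there for $w$ near $z_0$. The paper's formulation buys a Picard--Lefschetz framework that is not tied to a closed-form kernel.

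\textbf{Your ``main obstacle'' is not one.} Fix a radius $\varepsilon>0$. Then any positive $\delta(\varepsilon)$ in the Cauchy bound $|F_k(w)|\le C\delta(\varepsilon)^{-k}$ on $\gamma_\varepsilon$ already makes $\sum_kF_k(w)u^k/k!$ entire in $u$. A single $\varepsilon$ serves a whole compact path. Shrinking from the original contour to $\gamma_\varepsilon$ is justified for two reasons: $\widetilde F(u,\cdot)$ is holomorphic on the punctured disc (Cauchy estimates in $s$ on compacts avoiding $z_0$), and for small $v$ the kernel is holomorphic on the whole disc.
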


\begin{proof}
\emph{Step 1: the partial Borel transform.}
Expand $P(s,z)=\sum_{k\geq0}P_k(z)s^k$.  Differentiating the contour formula
\eqref{eq:cluster-principal-part} at $s=0$ shows that every $P_k$ is
meromorphic on the Riemann sphere, vanishes at infinity, and has its only
possible pole at $z_0$.  For every compact
$K\Subset\C\setminus\{z_0\}$, Cauchy's estimate in $s$ gives
\begin{equation}
    \sup_{z\in K}|P_k(z)|\leq C_K\rho_K^{-k}.
    \label{eq:Pk-bound}
\end{equation}
Thus
\begin{equation}
    \widetilde P(u,z)=\sum_{k\geq0}P_k(z)\frac{u^k}{k!}
    \label{eq:P-partial-borel}
\end{equation}
is entire in $u$ and holomorphic in $z\neq z_0$, locally uniformly together
with all $u$ derivatives.

We also need uniform control at infinity.  Fix a cluster contour in
\eqref{eq:cluster-principal-part}, and choose $\rho>0$ so that $F$ is
holomorphic on that contour for $|s|\leq\rho$.  Since the contour is bounded,
its integral representation gives $|P(s,z)|\leq C/|z|$ for sufficiently
large $|z|$, uniformly in $|s|\leq\rho$.  Cauchy's estimate then gives
\begin{equation}
 |P_k(z)|\leq \frac{C\rho^{-k}}{|z|},
 \qquad
 |\widetilde P(u,z)|\leq\frac{C e^{|u|/\rho}}{|z|}.
 \label{eq:partial-borel-infinity}
\end{equation}
Thus the partial Borel transform introduces no new spatial singularity
away from $z_0$, although its singularity at $z_0$ need not remain a pole.

\emph{Step 2: the Abel period and its normalization.}
For $v\in\C$, introduce the Abel curve
\begin{equation}
    \Sigma_v=\{(z,y)\in\C^2:y^2=z^2-2v\}.
\end{equation}
For $v\neq0$, the differential $\widetilde P(u,z)\dd z/y$ is regular at
the branch points unless $z=z_0$: the identity $\dd z/y=\dd y/z$ gives
a local expression there.  It is also regular at the two points at infinity
on the compactified curve, since \eqref{eq:partial-borel-infinity} gives
$\widetilde P(u,z)\dd z/y=O(z^{-2})\dd z$.  Consequently, filling in those
two points does not change any of the periods under consideration.

For small nonzero $v$, let $\delta_v$ be the lift, on the sheet
$y\sim z$ at infinity, of a positively oriented loop surrounding the two
branch points $\pm\sqrt{2v}$ and not $z_0$.  Then
\begin{equation}
    \mathcal A(u,v)=\frac1{2\pi i}
    \int_{\delta_v}\widetilde P(u,z)\frac{\dd z}{y}
    \label{eq:abel-cycle}
\end{equation}
agrees with the circular average in \eqref{eq:circular-average}.  At $v=0$
the period extends holomorphically with value $\widetilde P(u,0)$, either by
its circular power series or by shrinking $\delta_v$ to the origin.  The
deleted point $z=z_0$ lies over a branch point precisely when
\begin{equation}
    v=a_0:=\frac{z_0^2}{2}.
\end{equation}

For reference, the equality with the circular average follows from the
explicit parametrization of the lifted cycle
\begin{equation}
 z=\sqrt{2v}\cos\theta,
 \qquad
 y=i\sqrt{2v}\sin\theta,
 \qquad 0\leq\theta\leq2\pi.
 \label{eq:abel-cycle-parametrization}
\end{equation}
Along this parametrization, $\dd z/y=i\,\dd\theta$ after choosing the
orientation of $\delta_v$ consistently.  Hence
\begin{equation}
 \frac1{2\pi i}\int_{\delta_v}G(z)\frac{\dd z}{y}
 =\frac1{2\pi}\int_0^{2\pi}
   G(\sqrt{2v}\cos\theta)\dd\theta.
 \label{eq:abel-circular-identity}
\end{equation}
This calculation also fixes the normalization and orientation used below.

\emph{Step 3: deformation away from the moving collision.}
Before the final $\xi$ derivative in \eqref{eq:double-borel}, the cluster
contribution is
\begin{equation}
    J(\xi)=\xi\int_0^1
    \mathcal A(t\xi,(1-t)\xi)\dd t.
    \label{eq:relative-cycle-parameters}
\end{equation}
This identity initially holds for $|\xi|$ small.  Let
$\gamma:[0,1]\to\C\setminus\{0,a_0\}$ be a piecewise smooth path starting
at a nonzero point of that disc.  For fixed $\xi$, the only value of $t$
for which a branch point of the Abel curve lies over $z=z_0$ is
\begin{equation}
    t_*(\xi)=1-\frac{a_0}{\xi}.
    \label{eq:collision}
\end{equation}
Along $\gamma$ one has $t_*(\xi)\neq0$ because $\xi\neq a_0$, and
$t_*(\xi)\neq1$ because $a_0\neq0$.
Thus the moving collision never reaches either endpoint of the $t$ path.
The separate degeneration at $t=1$, where $v=0$, will be treated in
Step~4 rather than by smooth-fiber transport.

Fix a point $\gamma(\lambda_0)$.  Choose a piecewise smooth path $L$ from
$0$ to $1$ in the complex $t$ plane avoiding
$t_*(\gamma(\lambda_0))$.  After shrinking to a neighborhood $U$ of this
$\xi$ value, the same $L$ avoids $t_*(\xi)$ for every $\xi\in U$.  For
$(t,\xi)\in L\times U$, the parameter $v=(1-t)\xi$ never equals $a_0$.
The fiber $\Sigma_0$ is singular, so one must not invoke a homology local
system across the endpoint $v=0$.  We therefore separate that endpoint.
Since $\gamma$ avoids zero, shrink $U$ so that $0\notin\overline U$, and
choose $L$ so that it meets $t=1$ only at its terminal point.  Parameterize
it by a map $\ell:[0,1]\to\C$ with $\ell(1)=1$.
Choose $\eta>0$ and $v_0<|a_0|/2$ so that
\begin{equation}
 |(1-\ell(\lambda))\xi|<v_0,
 \qquad 1-\eta\leq\lambda\leq1,
 \quad \xi\in\overline U.
 \label{eq:terminal-small-v}
\end{equation}

On the truncated set $0\leq\lambda\leq1-\eta$, the parameter $v$ stays in a
compact subset of $\C\setminus\{0,a_0\}$.  The cycle may therefore be
transported in the smooth punctured family
\begin{equation}
    \Sigma_v\setminus\{z=z_0\},
    \qquad v\neq0,a_0.
\end{equation}
Equivalently, one transports its class in the first-homology local system.
The pullback of this local system to $[0,1-\eta]\times U$ is trivial after
$U$ is chosen contractible.

\emph{Step 4: regularity at the terminal fiber.}
It remains to justify the terminal portion, where $v\to0$.  On the punctured
disc $0<|v|<v_0$, choose a standard vanishing cycle
$\delta_v^{\mathrm{van}}$ around the two branch points.  A transported class
from the truncated portion can be represented, with integer coefficients
independent of $(\lambda,\xi)$, as an integer multiple of this vanishing class and
finitely many small loops supported away from the node $z=0$; the only loops
relevant to the holomorphic form on the punctured curve in
\eqref{eq:abel-cycle} may be taken around
the two points lying over $z=z_0$.  This is the elementary local
decomposition of the homology of a one-node degeneration.

Here is a direct justification of that decomposition.  Choose
$0<\epsilon<|z_0|/3$ and reduce $v_0$ so that $2v_0<\epsilon^2/4$.
The inverse image of $|z|<\epsilon$ is an annulus, whose first homology
is generated by $\delta_v^{\mathrm{van}}$.  On $|z|>\epsilon/2$, including
infinity, the two sheets have the holomorphic expressions
\begin{equation}
 y_\pm(z,v)=\pm z\left(1-\frac{2v}{z^2}\right)^{1/2},
 \label{eq:terminal-outer-sheets}
\end{equation}
where the square root is given by its binomial series.  Each outer sheet,
with its point at infinity filled in, is a disc punctured at the point over
$z_0$.  Its first homology is therefore generated by a small loop about
that puncture.  Gluing these two punctured discs to the inner annulus
shows that the vanishing loop and the two puncture loops generate the
homology needed for the transported period.  Filling in infinity is
legitimate by Step~2.  The generators vary continuously on the terminal
parameter set, so a transported class has an expression with constant
integer coefficients there.

Each resulting period is holomorphic at $v=0$.  For the vanishing part this
follows directly from the convergent circular expansion
\begin{equation}
 \frac1{2\pi}\int_0^{2\pi}
 \widetilde P(u,\sqrt{2v}\cos\theta)\dd\theta
 =\sum_{q\geq0}
   \frac{\partial_z^{2q}\widetilde P(u,0)}{(2q)!}
   \frac1{2\pi}\int_0^{2\pi}
      (\sqrt{2v}\cos\theta)^{2q}\dd\theta,
 \label{eq:terminal-circular-series}
\end{equation}
which is locally uniformly convergent in $(u,v)$.  For a loop over $z_0$,
choose one of the two holomorphic branches of
$y=\sqrt{z^2-2v}$ near $z_0$; this is possible because
$|v|<v_0<|z_0|^2/4$.  Its period is then a fixed small contour integral of a
function holomorphic in $(u,v)$ on the contour, and hence is holomorphic at
$v=0$.  The same argument applies to every loop supported away from the
node.  Thus the transported period on the terminal portion has a
holomorphic extension to $v=0$, regardless of the monodromy accumulated
before entering the small $v$ disc.

\emph{Step 5: holomorphic germs and their compatibility.}
On the truncated parameter set, compactness permits a finite collection of
local contour representatives avoiding $z_0$ and the branch points; the
branch points themselves are regular points of the differential by Step~2.
The representatives may be taken in a bounded $z$ set, since the
differential is regular at infinity.  On each such contour,
\eqref{eq:Pk-bound} gives locally uniform bounds for the partial Borel
series and its parameter derivatives.  On the terminal portion, the
fixed puncture contours and the convergent series
\eqref{eq:terminal-circular-series} give the same bounds, now up to
$v=0$.  These representations show that the transported period is locally
holomorphic in its parameters and justify integration and differentiation
in $\xi$.  Thus
\begin{equation}
    J_U(\xi)=\xi\int_L
    \mathcal A(t\xi,(1-t)\xi)\dd t
\end{equation}
is holomorphic on $U$.  On overlaps, two such germs agree by deforming the
$t$ paths without crossing $t_*(\xi)$ and transporting the Abel cycle in the
same homology class.  More precisely, cover $\gamma$ by a finite ordered
chain of such neighborhoods and choose the path $L$ inductively: on each new
neighborhood it is a small deformation of the path already chosen on the
preceding overlap.  The two paths are then in the same relative homotopy
class in $\C\setminus\{t_*(\xi)\}$, so Cauchy's theorem gives equality of the
two germs.  Paths with a different winding number may produce a different
analytic branch, which is harmless for pathwise analytic continuation.  The
ordered chain therefore continues $J$, and then its $\xi$ derivative, along
the whole path $\gamma$.

In local coordinates about a collision with $\xi_*\neq0$,
$u=t-t_*$, $w=z-z_0$, and $\eta=\xi-\xi_*$, where
$t_*=1-a_0/\xi_*$, the branch locus is
\begin{equation}
    2z_0w+w^2+2\xi_*u-2(1-t_*)\eta+2u\eta=0.
    \label{eq:correct-local-collision}
\end{equation}
Because $z_0\neq0$, this is locally a graph $w=\lambda(u,\eta)$; the
double-cover coordinate is $y$, not $w$.  The homology transport above is
the invariant way to bypass this moving branch point.
\end{proof}

\begin{corollary}[The action is a genuine singularity]
\label{cor:genuine-action}
The Taylor branch of the cluster Borel transform has a nonremovable
singularity at $a_0=z_0^2/2$ and extends holomorphically across every other
point of its convergence circle.
\end{corollary}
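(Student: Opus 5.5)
The plan is to combine the finite radius from Corollary~\ref{cor:cluster-radius} with the continuation statement of Lemma~\ref{lem:heat-pole}. Let $B_{z_0}$ be the Taylor branch of the cluster Borel transform. By Corollary~\ref{cor:cluster-radius} it converges exactly on $|\xi|<|z_0|^2/2$. Hence, by the classical Pringsheim--Vivanti-type principle for power series, at least one point of the circle $|\xi|=|z_0|^2/2$ is a singular point. This means the germ cannot be continued holomorphically into any neighborhood of that point.

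To place the singularity, take any point $\xi_1\neq a_0$ with $|\xi_1|=|z_0|^2/2$. Then $\xi_1\neq0$. I would choose a short straight radial path from a nonzero point $\xi_1(1-\epsilon)$ inside the disc to $\xi_1$, and extend it slightly beyond $\xi_1$. This path avoids $\{0,a_0\}$ as long as $\epsilon$ is small and the extension is short. Lemma~\ref{lem:heat-pole} gives analytic continuation along it, starting from the Taylor branch.

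Continuation along a path gives a germ at each point, but I need a holomorphic function on a full neighborhood of $\xi_1$ that agrees with $B_{z_0}$ on the part of that neighborhood lying inside the disc. To get this, I would apply the lemma to all paths in a small disc $D$ about $\xi_1$ that misses $0$ and $a_0$, starting from a common base point in $D\cap\{|\xi|<|z_0|^2/2\}$. Since $D$ is simply connected and continuation is possible along every path in $D$, the monodromy theorem yields a single-valued holomorphic function on $D$. That function coincides with the Taylor branch near the base point, and hence on the connected set $D\cap\{|\xi|<|z_0|^2/2\}$. So $\xi_1$ is a regular boundary point. This proves the second assertion.

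For the first assertion, every boundary point other than $a_0$ is regular. The circle must still contain a singular point, and $|a_0|=|z_0|^2/2$ places $a_0$ on the circle. Therefore $a_0$ is a nonremovable singularity of the Taylor branch.

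The main point requiring care is the local single-valuedness on $D$. I expect this to be routine given the lemma, since the lemma asserts continuation along every path in $\C\setminus\{0,a_0\}$ and $D$ avoids both exceptional points. One should also check that the lemma's requirement of a nonzero starting point near the origin is compatible with the setup. This is handled by first continuing along a radial segment from a small nonzero point to the base point, all inside the convergence disc, where the continuation is simply the Taylor series itself.
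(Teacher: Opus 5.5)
Your proposal is correct and matches the paper's proof: it uses the finite radius $|a_0|$ from Corollary~\ref{cor:cluster-radius}, continuation via Lemma~\ref{lem:heat-pole} along the radius to each boundary point $b\neq a_0$, and the classical fact that a power series with finite radius has a singular point on its circle of convergence (this is the general theorem; Pringsheim--Vivanti is the special case for nonnegative coefficients). Your monodromy step on $D$ is valid but unnecessary: the germ reached at $\xi_1$ along the radius already agrees with the Taylor branch at interior points of the radius, hence by connectedness on all of $D\cap\{|\xi|<|a_0|\}$, which is how the paper concludes.
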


\begin{proof}
By Corollary~\ref{cor:cluster-radius}, its Taylor series at zero has finite
radius $|a_0|$.  For a point $b$ on $|\xi|=|a_0|$ with $b\neq a_0$,
start at a sufficiently small nonzero point on the radius towards $b$ and
continue along that radius, including its endpoint.  This path avoids
$0$ and $a_0$, so Lemma~\ref{lem:heat-pole} provides a holomorphic germ at
$b$ agreeing with the original Taylor branch on the inner part of the
radius.  Thus every such $b$ is regular for that branch.
Since a Taylor series with finite radius
has a singular point on its circle of convergence, that point must be
$a_0$.
\end{proof}

\subsection{Distinct zeros cannot cancel}

The action map is
\begin{equation}
    \mathfrak a(z)=\frac{z^2}{2}.
    \label{eq:action-map}
\end{equation}
Two zeros have the same action only when they are $z$ and $-z$.
Accordingly, we first find a singular sum of the clusters sharing one
action, and only then separate that sum from the remaining actions.
This distinction is essential: analytic continuation separates different
actions, but by itself says nothing about cancellation at a shared action.

\begin{proposition}[No cancellation between zero clusters]
\label{prop:no-cancellation}
If $M$ has at least one zero, then the Borel transform of $\widehat A$ has a
nonremovable singularity at a finite action point.
\end{proposition}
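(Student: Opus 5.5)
Pick a zero $z_0$ of $M$ of \emph{minimal modulus} among the nonzero zeros, and put $a_0=z_0^2/2$. Fix $R>|z_0|$ with no zero of $M$ on $|z|=R$. Then split
\[
F=H+\sum_jP_j
\]
as in Proposition~\ref{prop:localization}. By that proposition, the Borel transform of the diagonal series of $H$ is holomorphic on $|\xi|<R^2/2$, which strictly contains the closed disc $|\xi|\le|a_0|$. Hence $\cB\widehat A$ equals, near $a_0$ and modulo a germ holomorphic there, the finite sum of cluster Borel transforms $B_{z_j}$. Here $z_j$ runs over the zeros in $|z|<R$. (Zero itself is not a zero of $M$, since $M(0)=1$.)

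Next, group the clusters by action. For every cluster with $\mathfrak a(z_j)\ne a_0$, I would continue $B_{z_j}$ from a small nonzero point along the ray to $a_0$. Lemma~\ref{lem:heat-pole} applies because this path avoids $0$ and $z_j^2/2$. The result is a germ holomorphic at $a_0$. If $|z_j|>|z_0|$, then $B_{z_j}$ is already holomorphic on $|\xi|<|z_j|^2/2$ by Corollary~\ref{cor:cluster-radius}, so it agrees there with the Taylor branch. If $|z_j|=|z_0|$, the ray to $a_0$ lies inside the convergence disc except at its endpoint. Corollary~\ref{cor:genuine-action} then says the Taylor branch extends across $a_0\neq z_j^2/2$. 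Either way, the continuation along the radius is the Taylor branch itself.

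Consequently, the only candidates for a singularity at $a_0$ are $B_{z_0}$ and, if $-z_0$ is also a zero, $B_{-z_0}$. These are the clusters sharing the action.

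The main obstacle is excluding cancellation between $B_{z_0}$ and $B_{-z_0}$. My first attempt uses the coefficient asymptotics. Each cluster's diagonal coefficients are given by Proposition~\ref{prop:simple-asymptotics} or Proposition~\ref{prop:multiple-asymptotics} with the same factor $d_n z_0^{-2n}=d_n(-z_0)^{-2n}$.
\begin{itemize}
\item If the multiplicities $m_\pm$ at $\pm z_0$ differ, the growth factors $n^{-m/2}e^{r_m\sqrt{2n}}$ differ. Since $r_m$ increases with $m$, one term strictly dominates and the sum has nonzero leading asymptotics.
\item If the multiplicities agree, I compare the explicit leading constants $K_{z_0,m}$ and $K_{-z_0,m}$, with the $m=1$ constant from \eqref{eq:simple-asymptotic}.
\end{itemize}

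For the equal-multiplicity case the key structural input is reality. Because $X$ is real, $M(\bar z)=\overline{M(z)}$, and a sign-reflection argument relates the germs at $z_0$ and $-z_0$ via $X\mapsto -X$. I expect the constants to be nonzero multiples of each other, and I must rule out that they are exact negatives.

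If that direct check fails, the fallback is to choose $z_0$ more cleverly. Among the minimal-modulus zeros there are finitely many actions. I would select one where the summed coefficient asymptotic is nonvanishing, using an independence argument: the finitely many sequences $d_n z_j^{-2n}n^{-m/2}e^{r_m\sqrt{2n}}$ with distinct $z_j^2$ on the circle are asymptotically linearly independent. So if every shared-action pair cancelled at leading order, the total coefficients would be $o\bigl(d_n|z_0|^{-2n}n^{-m_{\max}/2}e^{r\sqrt{2n}}\bigr)$.

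Once the leading coefficient of $B_{z_0}+B_{-z_0}$ is shown nonzero at the rescaled scale, its radius of convergence is exactly $|a_0|$. The argument above shows that every point of the circle $|\xi|=|a_0|$ other than $a_0$ is regular for this sum. Hence $a_0$ is singular for $B_{z_0}+B_{-z_0}$, and therefore for $\cB\widehat A$, since the remaining terms are holomorphic at $a_0$.
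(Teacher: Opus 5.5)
Your reduction to the shared action is sound, and your unequal-multiplicity bullet is exactly the dominance argument the paper uses. The genuine gap is the case where $\pm z_0$ are both zeros of the \emph{same} multiplicity. You never exclude cancellation there, and neither remedy you propose can do so.

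\textbf{Why the direct comparison fails.} Reality gives $M(\bar z)=\overline{M(z)}$, which relates the germ at $z_0$ to the germ at $\bar z_0$, not to the germ at $-z_0$. The substitution $X\mapsto -X$ produces the MGF of a \emph{different} law. So for a general law the local data at $z_0$ and $-z_0$ that enter \eqref{eq:simple-asymptotic} or \eqref{eq:multiple-global-constant} are unrelated: $M'(\pm z_0)$ and $M''(\pm z_0)$, or $a$ and $B_{r_m}$ when $m\ge2$. Even when $z_0$ is purely imaginary, reality only gives $C_{-z_0}=\overline{C_{z_0}}$ with $z_0^{-2n}$ real. The pair then has leading constant $2\operatorname{Re}C_{z_0}$, and nothing in your argument prevents this from vanishing.

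\textbf{Why the fallback fails.} Asymptotic independence of sequences with \emph{distinct} values of $z_j^2$ says nothing about two clusters with the \emph{same} $z_j^2$. If every pair cancelled at leading order, you would only get smaller coefficients, not a contradiction. Moreover, leading-order cancellation would not show that the pair sum is regular at $a_0$. Coefficient asymptotics alone therefore cannot settle this case in either direction.

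\textbf{The missing idea.} It is the paper's symmetry dichotomy on the zero divisor, which requires giving up minimal modulus.
\begin{itemize}
\item If $m(z)\neq m(-z)$ for some $z$, choose $z_0$ with $m(z_0)>m(-z_0)$; it need not have minimal modulus. Your dominance bullet, with $r_1:=0$ and $r_m$ strictly increasing in $m$, then shows that $B_{z_0}+B_{-z_0}$ has radius exactly $|a_0|$.
\item If the divisor is invariant under $z\mapsto -z$, use Hadamard factorization with a common even genus-two canonical product. This gives $M(z)/M(-z)=e^{cz}$ with $c\in\R$. Replacing $X$ by $X-c/2$ changes $A$ only by an additive constant and makes the law symmetric. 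Then $F(s,-z)=F(s,z)$, the clusters at $\pm z_0$ have identical diagonal series, and the pair sum is $2B_{z_0}$. This is singular at $a_0$ by Corollary~\ref{cor:genuine-action}.
\end{itemize}

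\textbf{Consequence for the separation step.} In the first case the selected $z_0$ may not have minimal modulus, so your shortcut is lost: other Taylor branches need not be holomorphic on all of $|\xi|<|a_0|$. Instead, follow the radius to $a_0$ with small detours inside $|\xi|<|a_0|$ around the other action points. Along this path the pair sum stays on its Taylor branch, while Lemma~\ref{lem:heat-pole} continues each other cluster to a germ regular at $a_0$.
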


\begin{proof}
Write $m(z)$ for the multiplicity of a zero, with $m(z)=0$ if $z$ is not a
zero, and set $B_z=0$ when $m(z)=0$.  We will select a zero $z_0$ for which
\begin{equation}
 S(\xi)=B_{z_0}(\xi)+B_{-z_0}(\xi),
 \qquad a_0=\frac{z_0^2}{2},
 \label{eq:shared-action-sum}
\end{equation}
is holomorphic on $|\xi|<|a_0|$ and singular at $a_0$ on its Taylor
branch.  There are two cases, according to whether the zero divisor is
invariant under reflection.

\emph{Case 1: the zero divisor is not invariant under $z\mapsto-z$.}
Choose $z_0$ with $m(z_0)>m(-z_0)$.  If $m(-z_0)=0$, the required property
of $S$ follows directly from Corollary~\ref{cor:genuine-action}.  Otherwise,
both clusters have the same action scale.  With $r_1:=0$, their
action-normalized Borel coefficients have a nonzero constant prefactor
and the respective scales
\begin{equation}
    n^{-m/2}e^{r_m\sqrt{2n}}
\end{equation}
by Propositions~\ref{prop:simple-asymptotics} and
\ref{prop:multiple-asymptotics}.  Strict interlacing of Hermite zeros makes
$r_m$ strictly increasing in $m$.  Hence the ratio of the smaller-multiplicity
coefficient to the larger-multiplicity coefficient tends to zero, regardless
of their complex prefactors.  The sum $S$ therefore has Taylor radius
$|a_0|$.  Lemma~\ref{lem:heat-pole} applies to both summands away from
$0$ and $a_0$.  The radial argument in
Corollary~\ref{cor:genuine-action} then shows that $a_0$ is the only
singular point on the convergence circle of $S$, as required.

\emph{Case 2: the zero divisor is invariant under $z\mapsto-z$.}
In the Hadamard factorizations of $M(z)$ and $M(-z)$, choose the same
genus-two canonical product.  Pairing opposite zeros makes this product
even, so it cancels in the quotient
\begin{equation}
    R_M(z)=\frac{M(z)}{M(-z)}
    \label{eq:symmetric-divisor-quotient}
\end{equation}
and gives $R_M(z)=\exp(p(z)-p(-z))$ for a polynomial $p$ of degree at
most two.  Thus $R_M(z)=e^{cz}$, and positivity on the real axis gives
$c\in\R$.  Replacing $X$ by $X-c/2$
leaves its MMSE unchanged and preserves a square-exponential moment after
decreasing its exponent if necessary.  The corresponding posterior-energy
function $A(s)=\E X^2-\mmse_X(s)$ changes only by a constant, so its positive
order coefficients and all finite Borel singularities are unchanged.  The
new MGF
\begin{equation}
    \widetilde M(z)=e^{-cz/2}M(z)
\end{equation}
is even.  The exponential factor is zero-free, so the zero divisor and all
action points $z_0^2/2$ are unchanged.  Uniqueness of the MGF makes the
shifted law symmetric.  It follows
that
\begin{equation}
    Q(s,-z)=Q(s,z),
    \qquad F(s,-z)=F(s,z).
    \label{eq:even-QF}
\end{equation}
The principal parts at $z_0$ and $-z_0$ are reflections of one another, and
the circular average \eqref{eq:circular-average} assigns them equal
diagonal series.  In particular, for the shifted law and any zero $z_0$,
$S=2B_{z_0}$ is holomorphic on $|\xi|<|a_0|$ and singular at $a_0$ by
Corollary~\ref{cor:genuine-action}.  Since translation changes the full
Borel transform only by a constant, it suffices to finish the argument
for this law.

\emph{Separation from all other actions.}
In either case we now have a zero $z_0$ and a sum $S$ with the required
Taylor-branch singularity.  Choose $R>|z_0|$ with no zero on $|z|=R$.
Proposition~\ref{prop:localization} gives
\begin{equation}
 \cB\widehat A
 =S
  +\sum_{\substack{|z_j|<R\\z_j\neq\pm z_0}}B_{z_j}+B_H,
 \label{eq:symmetric-localized-decomposition}
\end{equation}
where the sum is finite and $B_H$ is holomorphic on $|\xi|<R^2/2$.
Every action in the remaining sum differs from $a_0$, because the entire
pair $\{z_0,-z_0\}$ has already been included in $S$.

Start at a nonzero point on the radius towards $a_0$, inside the common
initial convergence disc.  Follow that radius towards $a_0$, making small
detours around the finitely many other action points on it.  The detours
can all be taken inside $|\xi|<|a_0|$ and away from zero, and the path
can be radial in a final neighborhood of $a_0$.  Along this path the
terms in \eqref{eq:symmetric-localized-decomposition} behave as follows:
\begin{itemize}
\item The function $S$ remains on its original Taylor branch, since the
entire path before its endpoint lies in its disc of holomorphy.
\item For each remaining cluster, the path including its endpoint avoids
both zero and that cluster's own action.  Lemma~\ref{lem:heat-pole}
therefore supplies a holomorphic germ across $a_0$ on the continued branch.
\item The remainder $B_H$ is already holomorphic across $a_0$, since
$|a_0|<R^2/2$.
\end{itemize}
The identity \eqref{eq:symmetric-localized-decomposition} is preserved
along the path by uniqueness of analytic continuation.  If the continued
full sum admitted a holomorphic extension across $a_0$, subtracting the
regular germs in the last two items would extend the original Taylor
branch of $S$ across $a_0$.  This contradicts the choice of $S$ and proves
the proposition.
\end{proof}

The conclusion is the existence of a finite singularity of a continuation
of the full Borel germ, which is enough to rule out an entire Borel
transform.  The argument does not require the selected action to have
minimal modulus, nor does it assert that every cluster singularity survives
in the full sum.

\section{Completion, consequences, and interpretation}
\label{sec:completion}

\begin{proof}[Proof of Theorem~\ref{thm:main}]
If $X\sim N(\mu,\sigma^2)$, then \eqref{eq:gaussian-mmse} proves analyticity
at zero.

Conversely, suppose $X$ satisfies \eqref{eq:subgaussian} and is
non-Gaussian.  Lemma~\ref{lem:zero-free} gives a complex zero of its MGF.
Proposition~\ref{prop:no-cancellation} then shows that the ordinary Borel
transform of the actual formal expansion $\widehat A$ has a nonremovable
singularity at a finite point.

If $\widehat A(s)=\sum a_ns^n$ had positive ordinary convergence radius,
then $|a_n|\leq CR^{-n}$ for some $C,R>0$, and
\begin{equation}
    \sum_{n\geq0}a_n\frac{\xi^n}{n!}
\end{equation}
would be entire.  This contradicts the finite Borel singularity.  Therefore
$\widehat A$ diverges.  By Proposition~\ref{prop:poincare}, $A$, and hence
$\mmse_X=\E X^2-A$, cannot be analytic at zero.
\end{proof}

\begin{proof}[Proof of Corollary~\ref{cor:rational}]
Since
\[
    \mmse_X(s)\longrightarrow\Var(X)
    \qquad\text{as }s\downarrow0,
\]
the rational function $R$ is bounded on a punctured one-sided neighborhood
of zero.  Hence zero is a removable singularity of $R$.  After filling in
this removable singularity, $R$, and therefore $\mmse_X$, is analytic at
zero.  Theorem~\ref{thm:main} and \eqref{eq:gaussian-mmse} complete the proof.
\end{proof}

\begin{proof}[Proof of Corollary~\ref{cor:mutual-information}]
The I--MMSE identity gives
\begin{equation}
    I_X'(s)=\frac12\mmse_X(s).
\end{equation}
If $I_X$ is analytic at zero, so is its derivative, and
Theorem~\ref{thm:main} implies that $X$ is Gaussian.  The converse is the
standard Gaussian mutual-information formula.
\end{proof}

The argument provides more than a binary analyticity criterion.  It gives a
complex spectrum controlling the large-order zero-SNR coefficients.  For an
individual simple-zero cluster at $z_0$, Proposition
\ref{prop:simple-asymptotics} gives
\begin{equation}
    [s^n]\widehat A_{z_0}(s)
    \sim C_{z_0}(2n-1)!!z_0^{-2n},
    \label{eq:simple-growth-summary}
\end{equation}
with $C_{z_0}\neq0$.  If several zeros have the same minimal action, the
full coefficient is the sum of their cluster contributions, so a separate
noncancellation argument is required.  For an individual zero of
multiplicity $m\geq2$, Hermite splitting adds the factor
\begin{equation}
    n^{-(m-1)/2}e^{r_m\sqrt{2n}}.
    \label{eq:multiple-growth-summary}
\end{equation}
Thus the zero-SNR expansion is naturally Gevrey--1, while the MGF zero set
acts as a complex non-Gaussianity spectrum through
\begin{equation}
    z_0\longmapsto \mathfrak a(z_0)=z_0^2/2.
\end{equation}

The square-exponential moment condition is used twice: to make $M$ an entire
function of order at most two, and to obtain a jointly holomorphic backward
heat evolution with exponentially flat spatial tails.  Determining the
weakest tail condition under which the same rigidity remains true is a
natural open problem.  Other directions include vector Gaussian channels,
matrix-valued SNR, and a full resurgent description of the Stokes data
associated with the MGF zero divisor.

\paragraph{Technical note.}
The most delicate ingredients of the proof are the uniform multiple-zero
saddle analysis in Section~\ref{sec:multiple} and the Abel-double-cover
continuation argument in Section~\ref{sec:continuation}.

\appendix

\section{Detailed estimates for the zero-SNR expansion}
\label{app:poincare-details}

This appendix supplies the uniform estimates used in
Proposition~\ref{prop:poincare}.  The point is to separate the exponentially
small part of the Gaussian integral from the part on which ordinary Taylor
expansion is uniform.

\subsection{Joint holomorphy of the backward heat transform}

Fix $0<\eta<\beta$ and $R<\infty$.  If $|s|\leq\eta$ and $|z|\leq R$,
then
\begin{equation}
    \left|e^{zX-sX^2/2}\right|
    \leq e^{R|X|+\eta X^2/2}.
    \label{eq:appendix-basic-domination}
\end{equation}
Set $\delta=\beta-\eta/2>0$.  Young's inequality gives
$R|x|\leq\delta x^2/2+R^2/(2\delta)$.  For every integer $j\geq0$,
$C_{\delta,j}:=\sup_{x\in\R}|x|^j e^{-\delta x^2/2}<\infty$.
Thus half of the positive margin $\delta$ absorbs the linear exponential,
and the other half absorbs the polynomial factor.  With
$C_{R,\eta,j}=e^{R^2/(2\delta)}C_{\delta,j}$, we obtain
\begin{equation}
    |X|^j e^{R|X|+\eta X^2/2}
    \leq C_{R,\eta,j}e^{\beta X^2},
    \qquad j\geq0.
    \label{eq:appendix-derivative-domination}
\end{equation}
Taking $j=2a+b$ covers every polynomial factor arising from mixed
derivatives, for all nonnegative integers $a,b$.  The right-hand side is
integrable by \eqref{eq:subgaussian}.  Thus every
mixed derivative in $s$ and $z$ may be passed through the expectation,
locally uniformly on $|s|<\beta$, and
\begin{equation}
    \partial_s^a\partial_z^b Q(s,z)
    =\left(-\frac12\right)^a
      \E\!\left[X^{2a+b}e^{zX-sX^2/2}\right].
    \label{eq:appendix-Q-derivatives}
\end{equation}
This proves the joint holomorphy asserted before
\eqref{eq:backward-heat}, rather than merely separate holomorphy.

\subsection{An explicit bound for the outer integral}

Let $K_s(z)=(2\pi s)^{-1/2}e^{-z^2/(2s)}$ for real $s>0$.  From
\eqref{eq:F-domination}--\eqref{eq:complete-square},
\begin{align}
    0
    &\leq \int_{|z|>\delta}K_s(z)F(s,z)\dd z \\
    &\leq \E\!\left[X^2
      \Pp\{|sX+\sqrt{s}G|>\delta\mid X\}\right].
    \label{eq:appendix-outer-start}
\end{align}
Using the union bound from the proof of Proposition~\ref{prop:poincare},
the right-hand side is at most
\begin{equation}
    \E\!\left[X^2\bm 1_{\{|X|>\delta/(2s)\}}\right]
    +\E X^2\,\Pp\{|G|>\delta/(2\sqrt{s})\}.
    \label{eq:appendix-outer-split}
\end{equation}
For the first term, choose $0<\beta_0<\beta$ and write
\begin{align}
    \E\!\left[X^2\bm 1_{\{|X|>\delta/(2s)\}}\right]
    &\leq
      e^{-\beta_0\delta^2/(4s^2)}\E[X^2e^{\beta_0X^2}].
    \label{eq:appendix-X-tail}
\end{align}
For the second, the standard Gaussian estimate gives
\begin{equation}
    \Pp\{|G|>\delta/(2\sqrt{s})\}
    \leq 2e^{-\delta^2/(8s)}.
    \label{eq:appendix-G-tail}
\end{equation}
Consequently, for sufficiently small $s$,
\begin{equation}
    \int_{|z|>\delta}K_s(z)F(s,z)\dd z
    \leq C e^{-c/s}.
    \label{eq:appendix-flat-tail}
\end{equation}
In particular, the outer integral is $O(s^N)$ for every $N$.

\subsection{The parabolic Taylor remainder}

Choose $\delta>0$ and $s_0>0$ such that $F$ is holomorphic on a
neighborhood of the closed bidisc
$|s|\leq s_0$, $|z|\leq2\delta$.  Fix $N\geq0$.  Taylor expansion first in
$s$ gives, uniformly for $|z|\leq\delta$,
\begin{equation}
    F(s,z)=\sum_{k=0}^{N}s^kF_k(z)+s^{N+1}R_N(s,z),
    \qquad |R_N(s,z)|\leq C_N.
    \label{eq:appendix-s-Taylor}
\end{equation}
For each $0\leq k\leq N$, expand $F_k$ to the odd degree
$2(N-k)+1$:
\begin{equation}
    F_k(z)
    =\sum_{\ell=0}^{2(N-k)+1}f_{k,\ell}z^\ell
      +z^{2(N-k)+2}R_{k,N}(z),
    \qquad |R_{k,N}(z)|\leq C_N.
    \label{eq:appendix-z-Taylor}
\end{equation}
The remainder in \eqref{eq:appendix-s-Taylor} contributes
$O(s^{N+1})$ after integration against $K_s$.  The $k$th remainder in
\eqref{eq:appendix-z-Taylor} contributes at most
\begin{align}
    C_Ns^k\int_{\R}K_s(z)|z|^{2(N-k)+2}\dd z
    &=C_Ns^{N+1}\E|G|^{2(N-k)+2} \\
    &=O_N(s^{N+1}).
    \label{eq:appendix-parabolic-remainder}
\end{align}
Replacing the truncated polynomial integrals over $|z|<\delta$ by integrals
over $\R$ changes them by $O(e^{-c/s})$.  Odd powers vanish by symmetry,
and even powers are exactly \eqref{eq:gaussian-moments}.  Hence the retained
terms are precisely
\begin{equation}
    \sum_{k+q\leq N}f_{k,2q}(2q-1)!!\,s^{k+q},
\end{equation}
with a total error $O(s^{N+1})$.  This proves
\eqref{eq:poincare} with all remainders uniform.

\section{Uniform coefficient extraction for a simple zero}
\label{app:simple-details}

We now justify the summation step in
Proposition~\ref{prop:simple-asymptotics}.  The following elementary
large-powers lemma is useful.  It is a parameter-dependent restatement of
standard estimates from singularity analysis and analytic combinatorics
\cite{FlajoletOdlyzko1990,FlajoletSedgewick2009}; no originality is claimed
for the underlying large-powers principle.

\begin{lemma}[Large powers, including the zero-drift case]
\label{lem:analytic-large-powers}
Let $a$ and $h$ be holomorphic on $|s|<\rho$, with
$a(0)=a_0\neq0$, $h(0)=0$, and $h'(0)=c$.  Put
\begin{equation}
    N_{n,k}=2(n-k)+1,
    \qquad 0\leq k\leq n.
\end{equation}
Then, for each fixed $k$,
\begin{equation}
    [s^k]a(s)e^{-N_{n,k}h(s)}
    =a_0\frac{(-2nc)^k}{k!}+O_k(n^{k-1}),
    \label{eq:appendix-fixed-k-large-power}
\end{equation}
where the error for $k=0$ is interpreted as zero.  Fix also $z_0\neq0$.
There are $\varepsilon>0$ and $C_0,C_1>0$ such that
\begin{equation}
    \frac{d_{n-k}}{d_n}|z_0|^{2k}
    \left|[s^k]a(s)e^{-N_{n,k}h(s)}\right|
    \leq C_0\frac{C_1^k}{k!},
    \qquad 0\leq k\leq\varepsilon n,
    \label{eq:appendix-simple-majorant}
\end{equation}
and
\begin{equation}
    \sum_{k>\varepsilon n}\frac{d_{n-k}}{d_n}|z_0|^{2k}
    \left|[s^k]a(s)e^{-N_{n,k}h(s)}\right|=o(1).
    \label{eq:appendix-simple-tail}
\end{equation}
\end{lemma}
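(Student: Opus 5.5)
The three assertions are proved separately. Each one reduces to a Cauchy estimate for $[s^k]a(s)e^{-Nh(s)}$ with $N=N_{n,k}$, combined with the elementary ratio bounds for $d_{n-k}/d_n$.

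\emph{Fixed $k$.} Write $h(s)=cs+h_2(s)$ with $h_2(s)=O(s^2)$, and expand
\[
e^{-Nh(s)}=\sum_{j\ge0}\frac{(-N)^j h(s)^j}{j!}.
\]
Only $j\le k$ contributes to $[s^k]$, because $h^j=O(s^j)$. The term $j=k$ contributes $a_0(-Nc)^k/k!$. Every term with $j<k$ is a fixed polynomial in $N$ of degree $j\le k-1$, so it is $O_k(n^{k-1})$. Since $N=2n+O_k(1)$, we have $(-Nc)^k=(-2nc)^k+O_k(n^{k-1})$, and \eqref{eq:appendix-fixed-k-large-power} follows. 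For $k=0$ the coefficient is exactly $a(0)=a_0$, so no error term arises. When $c=0$, the main term vanishes for $k\ge1$ and only the additive error survives, which is the intended reading.

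\emph{The majorant \eqref{eq:appendix-simple-majorant}.} Pick $\rho_1<\rho$ such that $|a|\le A$ and $|h(s)|\le H|s|$ on $|s|\le\rho_1$. For $1\le k\le\varepsilon n$, apply Cauchy's formula on the circle $|s|=k/N$, which lies inside $|s|\le\rho_1$ once $\varepsilon$ is small, since $N\ge n$ in this range. This gives
\[
\bigl|[s^k]ae^{-Nh}\bigr|\le A\,e^{Hk}(N/k)^k\le A\,(eH)^k e^{k}\frac{N^k}{k!},
\]
using $k^k\ge k!$. From the product formula one has $d_{n-k}/d_n\le(2n-2k+1)^{-k}$, because each of the $k$ removed factors is at least $2(n-k)+1$. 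Hence
\[
\frac{d_{n-k}}{d_n}N^k\le1,
\]
since $N=2(n-k)+1$. Multiplying by $|z_0|^{2k}$ yields \eqref{eq:appendix-simple-majorant} with $C_1=e^2H|z_0|^2$ and $C_0=A$. The case $k=0$ is trivial.

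\emph{The tail \eqref{eq:appendix-simple-tail}.} This is the step where care is needed. For $k>\varepsilon n$, the radius $k/N$ is too large, so instead use a fixed circle $|s|=\rho_1$. On that circle $|e^{-Nh}|\le e^{NH\rho_1}\le e^{3Hn\rho_1}$, so
\[
\bigl|[s^k]ae^{-Nh}\bigr|\le A\rho_1^{-k}e^{C n}.
\]
Combine this with $d_{n-k}/d_n\le 1/d_k$, which holds because $d_n\ge d_kd_{n-k}$. Since $d_k\ge k!\ge(k/e)^k$, each summand is at most
\[
A\,e^{Cn}\Bigl(\frac{e|z_0|^2}{\rho_1 k}\Bigr)^k.
\]
For $k>\varepsilon n$ and $n$ large, the base is below $e^{-(C+1)/\varepsilon}$, because $k\ge\varepsilon n\to\infty$. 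Each summand is therefore at most $Ae^{Cn-(C+1)n}=Ae^{-n}$, up to a geometric decay in $k$. Summing over the at most $n$ terms gives $o(1)$; in fact the bound is $e^{-cn\log n}$.

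The only point needing care is that the fixed-circle factor $e^{Cn}$ is beaten by $1/d_k$. This works because $d_k$ grows superexponentially while $k\ge\varepsilon n$.
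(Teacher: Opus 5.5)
Your proof is correct and follows the paper's argument in all three parts: the polynomial dependence of the coefficient on $N$ for fixed $k$, a Cauchy circle of radius of order $k/n$ together with a double-factorial ratio bound for the majorant, and a fixed circle with $d_{n-k}/d_n\le 1/d_k$ for the tail; your exact cancellation $d_{n-k}/d_n\le N_{n,k}^{-k}$ is a slightly tidier form of the paper's $(C_2/n)^k$. The only slip is in a constant: on $|s|=k/N$ the Cauchy bound is $A e^{Hk}N^k/k!$, so you should take $C_1=e^{H}|z_0|^2$ rather than $e^{2}H|z_0|^2$ (the inequality $e^{H}\le e^{2}H$ fails in general), and this does not affect the conclusion.
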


\begin{proof}
For fixed $k$, the coefficient $[s^k]a(s)e^{-Nh(s)}$ is a polynomial in
$N$ of degree at most $k$.  Indeed,
\begin{equation}
    e^{-Nh(s)}=\sum_{j=0}^{k}\frac{(-N)^j}{j!}h(s)^j+O(s^{k+1}),
\end{equation}
and only $j\leq k$ can contribute.  Since
$h(s)^k=c^ks^k+O(s^{k+1})$, the coefficient of $N^k$ is
$a_0(-c)^k/k!$.  Substituting $N=N_{n,k}=2n+O_k(1)$ proves
\eqref{eq:appendix-fixed-k-large-power}.  This also covers $c=0$, when the
degree simply drops below $k$.

For uniform bounds, take the Cauchy radius
\begin{equation}
    r_{n,k}=\frac{k+1}{An}
    \label{eq:appendix-simple-cauchy-radius}
\end{equation}
with $A$ large enough that $r_{n,k}<\rho/2$ whenever
$k\leq\varepsilon n$.  Since $h(s)=O(s)$ and $a$ is bounded on the smaller
disc, Cauchy's formula gives
\begin{equation}
    \left|[s^k]a(s)e^{-N_{n,k}h(s)}\right|
    \leq C_0r_{n,k}^{-k}e^{C_0N_{n,k}r_{n,k}}
    \leq C_0\frac{(C_1n)^k}{k!},
    \label{eq:appendix-simple-uniform-bound}
\end{equation}
where the constants may be enlarged in the second inequality, which follows
from Stirling's bound.  The prefactor also covers $k=0$, when the coefficient
is $a_0$.  Also
\begin{equation}
    \frac{d_{n-k}}{d_n}
    =\prod_{j=n-k+1}^{n}\frac1{2j-1}
    \leq \left(\frac{C_2}{n}\right)^k
    \label{eq:appendix-simple-d-ratio-bound}
\end{equation}
uniformly for $k\leq\varepsilon n$ after decreasing $\varepsilon$ if
necessary.  After absorbing $|z_0|^{2k}$ and relabeling the constants this proves
\eqref{eq:appendix-simple-majorant}.

For $k>\varepsilon n$, use a fixed Cauchy radius $r<\rho$.  The coefficient
is bounded by $r^{-k}e^{Cn}$.  Since
\begin{equation}
    \frac{d_{n-k}}{d_n}\leq\frac1{d_k}
    \leq\left(\frac{C_3}{k}\right)^k,
    \label{eq:appendix-simple-large-k}
\end{equation}
the normalized $k$th summand is bounded by
$e^{Cn}(C_4/k)^k$.  Its sum for $k>\varepsilon n$ is
$e^{-\Omega(n\log n)}$, proving \eqref{eq:appendix-simple-tail}.
\end{proof}

Apply Lemma~\ref{lem:analytic-large-powers} with
\begin{equation}
    a(s)=\frac{r(s)}{r(0)},
    \qquad
    h(s)=\log\frac{\zeta(s)}{z_0}.
\end{equation}
After factoring $-r(0)d_nz_0^{-2n-1}$ from
\eqref{eq:simple-diagonal}, the term indexed by $k$ converges to
\begin{equation}
    \frac{(-cz_0^2)^k}{k!}.
    \label{eq:appendix-simple-limiting-term}
\end{equation}
Dominated convergence for the sum over $k$ gives
\begin{equation}
    \sum_{k\geq0}\frac{(-cz_0^2)^k}{k!}
    =e^{-cz_0^2}.
    \label{eq:appendix-simple-exponential-sum}
\end{equation}
Using $c=\zeta'(0)/z_0$ and \eqref{eq:zeta-prime} now gives exactly
\eqref{eq:simple-asymptotic}.

\section{Proof-dependency summary}
\label{app:proof-summary}

For readers checking the argument line by line, the logical dependencies are
as follows.

\begin{enumerate}[label=(\roman*),leftmargin=2.2em]
\item Lemma~\ref{lem:zero-free} turns non-Gaussianity into the existence of
an MGF zero.  Its only external input is Hadamard factorization for an entire
function of finite order.

\item Proposition~\ref{prop:poincare}, with
Appendix~\ref{app:poincare-details}, identifies the formal diagonal series
with the actual right Taylor asymptotics of the MMSE.

\item Proposition~\ref{prop:double-borel} is a coefficient identity.  No
analytic continuation is used in its derivation.

\item Proposition~\ref{prop:localization} ensures that only finitely many
zero clusters matter near a prescribed Borel action.

\item Proposition~\ref{prop:simple-asymptotics} is completed by the
large-powers estimate in Appendix~\ref{app:simple-details}.

\item Proposition~\ref{prop:multiple-asymptotics} follows from the uniform
Cauchy saddle in Lemma~\ref{lem:uniform-extraction}, the global tail bound
\eqref{eq:phase-upper}, the explicit central factor
\eqref{eq:stirling-central-factor}, and the dominated discrete-Gaussian
summation in Section~\ref{sec:multiple}.  The nonzero leading constant is
given in \eqref{eq:multiple-global-constant}.  This is the most delicate
coefficient calculation.

\item Lemma~\ref{lem:heat-pole} transports the Abel cycle in the first
homology local system of the smooth punctured double cover only while
$v\neq0,a_0$.  The singular endpoint $v=0$ is handled separately by the
vanishing-cycle/outer-cycle decomposition and the convergent series
\eqref{eq:terminal-circular-series}.  The continuation paths in the Borel
plane avoid both $0$ and $a_0$; no claim about return to zero on another
branch is used.  The local branch locus is
\eqref{eq:correct-local-collision}, with double-cover coordinate $y$.
This is the most delicate analytic-continuation step.

\item Corollary~\ref{cor:genuine-action} combines the coefficient radius
with Lemma~\ref{lem:heat-pole} to identify $z_0^2/2$ as the unique boundary
obstruction for a zero cluster.

\item Proposition~\ref{prop:no-cancellation} combines finite-disc
localization, local continuation, and the symmetry dichotomy.  Paths to a
target action remain inside its Taylor convergence disc until reaching
the endpoint, so detours around other actions preserve the target's
original singular branch.  The final
contradiction uses only the elementary fact that the Borel transform of a
convergent power series is entire.
\end{enumerate}

\section*{Acknowledgments}

The author used OpenAI Codex through Prism to assist with
language editing, restructuring and drafting portions of the
proof exposition in Section~\ref{sec:continuation}, and checking
cross-references and bibliographic metadata. The author takes
full responsibility for the mathematical claims, proofs, and
references in this article.

\end{document}